\begin{document}
\doi{10.1080/1741597YYxxxxxxxx}
 \issn{1741-5985}
\issnp{1741-5977}

\jvol{00} \jnum{00} \jyear{2012} \jmonth{August}

\markboth{J. Su et al.}{Inverse Problems in Science and Engineering}


\title{{\itshape Optical imaging of phantoms from real data by an approximately globally convergent inverse algorithm}}


\author{Jianzhong Su$^{\rm a}$, %
Michael V. Klibanov$^{\rm b}$, %
Yueming Liu$^{\rm a}$, %
Zhijin Lin$^{\rm c}$, %
Natee Pantong$^{\rm d}$ and %
Hanli Liu$^{\rm c}$\\\vspace{6pt} %
$^{\rm a}${\em{Department of Mathematics, University of Texas at Arlington, Arlington, TX 76019, USA}}; %
$^{\rm b}${\em{Department of Mathematics and Statistics, University of North Carolina at Charlotte, Charlotte, NC 28223, USA}}%
$^{\rm c}${\em{Department of Bioengineering, University of Texas at Arlington, Arlington, TX 76019, USA}}%
$^{\rm d}${\em{Department of Mathematics and Computer Science, Royal Thai Air Force Academy, Bangkok, Thailand}}%
}

\maketitle

\begin{abstract}
A numerical method for an inverse problem for an elliptic equation
with the running source at multiple positions is presented. This
algorithm does not rely on a good first guess for the solution. The
so-called \textquotedblleft approximate global convergence" property
of this method is shown here. The performance of the algorithm is
verified on real data for Diffusion Optical Tomography. Direct
applications are in near-infrared laser imaging technology for
stroke detection in brains of small animals.\bigskip

\begin{keywords}Approximate Global Convergence Property; Inverse Problem; Diffusion
Optical Tomography; Real Data
\end{keywords}
\begin{classcode}65B21; 65D10; 65F10\end{classcode}\bigskip

\end{abstract}

\section{Introduction}

We consider a Coefficient Inverse Problem (CIP) for a partial
differential equation (PDE) - the diffusion model with the unknown
potential. The boundary data for this CIP, which model measurements,
are originated by a point source running along a part of a straight
line. This PDE governs light propagation in a diffusive medium, such
as, e.g. biological tissue, smog, etc.. Thus, our CIP is one of
problems of Diffusion Optical Tomography (DOT). We are interested in
applications of DOT to the detection of stroke in small animals
using measurements of near infrared light originated by lasers.
Hence, the above point source is the light source in our case. The
motivation of imaging of small animals comes from the idea that it
might be a model case for the future stroke detection in humans, at
least in brains of neonatals, via DOT. We apply our numerical method
to a set of real data for a phantom medium modeling the mouse's
brain. Although this algorithm was developed in earlier publications
\cite{bib18,bib22,bib23,bib24} of this group, its experimental
verification is new and is the main contribution of this paper.

As to our numerical method, we introduce a new concept of the
\textquotedblleft approximate global convergence" property. In the
previous publication \cite{bib18} of this group about this method
the approximate global convergence property was established in the
continuos case. Compared with \cite{bib18}, the main new analytical
result here is that we establish this property for the more
realistic discrete case. In the convergence
analysis of \cite{bib18} the Schauder theorem \cite{bib16} was applied for $%
C^{2+\alpha }-$solutions of certain elliptic equations arising in
our method. Now, however, since we consider the discrete case, we
use the Lax-Milgram theorem. Here and below $C^{k+\alpha }$ are
H\"{o}lder spaces \cite{bib16}, where $k\geq 0$ is an integer and
$\alpha \in \left( 0,1\right) $.

CIPs are both nonlinear and ill-posed. These two factors cause very
serious challenges in their numerical treatments. Indeed,
corresponding least squares Tikhonov regularization functionals
usually suffer from multiple local minima and ravines. As a result,
conventional numerical methods for CIPs are locally convergent ones,
see, e.g. \cite{bib4} and references cited there. To have a
guaranteed convergence to a true solution, a locally convergent
algorithm should start from a point which is located in a small
neighborhood of this solution. However, it is a rare case in
applications when such a point is known. The main reason why our
method avoids local minima is that it uses the structure of the
underlying PDE operator and does not use a least squares functional.

Therefore,  it is important to develop such numerical methods for
CIPs, which would have a rigorous guarantee of providing a point in
a small neighborhood of that exact solution without any \emph{a
priori} knowledge of this neighborhood. It is well known, however,
that the goal of the development of such algorithms is a
substantially challenging one. Hence, it is unlikely that such
numerical methods would not rely on some approximations. This is the
reason why the notion of the approximate global convergence property
was introduced in the recent work \cite{bib20}; also see section
1.1.2 of the book \cite{bibBook} and subsection 4.1 below. The
verification of our approximately globally convergent numerical
method on computationally simulated data was done in \cite%
{bib18,bib22,bib23,bib24}. In this paper we make the next step: we
verify this method on real data. Regardless on some approximations
we have here, the \emph{main point} is that our numerical method
does not rely on any a priori knowledge of a small neighborhood of
the exact solution.

In \cite{bib7,bib9,bib10,bib19,bib20,bib201} a similar numerical
method for CIPs for a hyperbolic PDE was developed. These results
were summarized in the book \cite{bibBook}. In particular, it was
demonstrated numerically in
Test 5 of \cite{bib7}, section 8.4 of \cite{bib19}, on page 185 of \cite%
{bibBook} and in section 5.8.4 of \cite{bibBook} that the
approximately globally convergent method of
\cite{bib7,bib9,bib10,bib19,bib20,bib201} outperforms such locally
convergent algorithms as quasi-Newton method and gradient method.
The main difference of the technique of these publications with the
one of the current paper is in the truncation of a certain integral.
In \cite{bib7,bib9,bib10,bib19,bib20,bib201} that integral was
truncated at a high value $\overline{s}>0$ of the parameter $s>0$ of
the Laplace transform of the solution of the underlying hyperbolic
PDE. Indeed, in the hyperbolic case the truncated residual of that
integral, which we
call the \textquotedblleft tail function\textquotedblright , is $O\left( 1/%
\overline{s}\right) $, as $\overline{s}\rightarrow \infty $, i.e.
this residual is automatically small in this case. Being different
from the hyperbolic case, in the elliptic PDE with the source
running along a straight line, $\overline{s}>0$ represents the
distance between the source and the origin. Because of this, the
tail function is not automatically small in our case. Thus, a
special effort to ensure this smallness was undertaken in
\cite{bib18}, and is repeated in the current publication. It was
shown numerically in \cite{bib18} that this special effort indeed
improves the quality of the reconstruction, compare Figures 2b and
2c in \cite{bib18}.

Our real data were collected at the boundary of a 2-D cross-section
of the 3-D domain of interest, see section 6. Hence, we have imaged
this cross-section only and have ignored the dependence on the third
variable. In addition, our theory requires the use of many sources
placed along a straight line. However, limitations of our
experimental device allow us to use only three sources on this line.
We believe that the accuracy of our results (Table 8.2) confirms the
validity of both these assumptions as well as manifests a well known
fact that there are always some discrepancies between the analysis
and computational studies of numerical methods.

Because of the above mentioned substantial challenges, the topic of
the development of non-locally convergent numerical methods for CIPs
is currently in its infancy. As to such methods for CIPs for
elliptic PDEs, we refer to, e.g. publications
\cite{bib2,bib12,bib17,bib21,bib30,bib31} and references cited
there. These publications are concerned with the
Dirichlet-to-Neumann map (DN). We are not using the DN\ here.

In a general setting, an ill-posed problem is a problem of solving
an equation $F\left( x\right) =y$ \ with a compact operator $F$.
Here $x\in B_{1}$, $y\in B_{2},$ where $B_{1}$ and $B_{2}$ are
certain Banach spaces. Naturally, the operator $F$ varies from one
problem to another one. It is well known that the existence theorem
for such an equation is very tough to prove, since the range of any
compact operator is `too narrow' in certain sense, see, e.g. Theorem
1.2 in \cite{bibBook}. Therefore, by one of fundamental concepts of
the theory of Ill-Posed Problems, one should assume the existence of
an exact solution of such a problem for the case of an
\textquotedblleft ideal" noiseless data \cite{bib5,bibBook,bib25}.
Although this solution is never known in practice and is never
achieved in practice (because of the noise in the real data), the
regularization theory says that one needs to construct a good
approximation for it. We assume that our CIP has an exact solution
for noiseless data and also assume that this solution is unique.

The rest of this paper is arranged as follows. In section 2 we pose
both forward and inverse problems and study some properties of the
solution of the forward problem. In section 3 we present our
numerical method. In section 4 we conduct the convergence analysis.
In section 5 we discuss the numerical implementation of our method.
In section 6 we describe the experiment. In section 7 we outline our
procedure of processing of real data. In section 8 we present
reconstruction results. We briefly summarize results in section 9.

\section{Statement of the Problem}

\subsection{The Inverse Problem}

Below \thinspace $\mathbf{x}=\left( x,z\right) \in \mathbb{R}^{2}$ and $%
\Omega \subset $ $\mathbb{R}^{2}$ is a convex bounded domain. The
boundary of this domain $\partial \Omega \in C^{3}$ in our analysis.
In numerical studies $\partial \Omega $ is piecewise smooth.
Consider the following elliptic equation in $\mathbb{R}^{2}$ with
the solution vanishing at infinity,
\begin{eqnarray}
\Delta u-a\left( \mathbf{x}\right) u &=&-\delta \left( \mathbf{x}-\mathbf{x}%
_{0}\right) ,\mathbf{x},\mathbf{x}_{0}\in \mathbb{R}^{2},  \label{2.1} \\
\underset{\left\vert \mathbf{x}\right\vert \rightarrow \infty }{\lim }%
u\left( \mathbf{x},\mathbf{x}_{0}\right)  &=&0.  \label{2.2}
\end{eqnarray}

\textbf{Inverse Problem.} \emph{Let }$k=const.>0$\emph{. Suppose that in (%
\ref{2.1}) the coefficient }$a\left( \mathbf{x}\right) $\emph{\
satisfies the following conditions }
\begin{equation}
a\in C^{1}\left( \mathbb{R}^{2}\right) ,\text{ }a\left(
\mathbf{x}\right)
\geq k^{2}\text{ and }a\left( \mathbf{x}\right) =k^{2}\text{ for }\mathbf{x}%
\in \mathbb{R}^{2}\diagdown \Omega .  \label{2.3}
\end{equation}%
\emph{Let\ }$L\subset \left( \mathbb{R}^{2}\diagdown \overline{\Omega }%
\right) $\emph{\ be a straight line and }$\Gamma \subset L$\emph{\
be an
unbounded and connected subset of }$L$.\emph{\ Determine the function }$%
a\left( \mathbf{x}\right) $\emph{\ inside of the domain }$\Omega
,$\emph{\ assuming that the constant }$k$\emph{\ is given and also
that the following function }$\varphi \left(
\mathbf{x},\mathbf{x}_{0}\right) $\emph{\ is given }
\begin{equation}
u\left( \mathbf{x,x}_{0}\right) =\varphi \left(
\mathbf{x,x}_{0}\right) ,\forall \left( \mathbf{x,x}_{0}\right) \in
\partial \Omega \times \Gamma . \label{2.4}
\end{equation}

We are unaware about a uniqueness theorem for this inverse problem.
Nevertheless, because of the above application, it is reasonable to
we develop a numerical method. Thus, we assume that uniqueness for
this problem holds. Our numerical studies of both the past
\cite{bib18,bib22,bib23,bib24} and the current publication indicate
that a certain uniqueness theorem might be established.

We assume that sources $\left\{ \mathbf{x}_{0}\right\} $ are located
outside of the domain of interest $\Omega $ because this is the case
of our measurements and because we do not want to work with
singularities in our numerical method.
The CIP (\ref{2.1})-(\ref{2.4}) has an application in imaging using
light propagation in a diffuse medium, such as biological tissues.
Since the modulated frequency equals zero in our case, then this is
the so-called continuous-wave (CW) light. The coefficient $a\left(
\mathbf{x}\right) =3\left( \mu _{s}^{\prime }\mu _{a}\right) \left(
\mathbf{x}\right) ,$ where $\mu _{s}^{\prime }\left(
\mathbf{x}\right) $ is the reduced scattering coefficient and $\mu
_{a}\left( \mathbf{x}\right) $ is the absorption coefficient of the
medium \cite{bib3,bib4}. In the case of our particular interest in
stroke detections in brains of small animals, the area of an early
stroke can be modeled as a small sharp inclusion in an otherwise
slowly fluctuating background. Usually the inclusion/ background contrast $%
a_{incl}/a_{b}\geq 2.$ Therefore our focus is on the reconstruction
of both
locations of sharp small inclusions and the values of the coefficient $%
a\left( \mathbf{x}\right) $ inside of them, rather than on the
reconstruction of slow changing background functions.

\subsection{Some Properties of the Solution of the Forward Problem (\protect
\ref{2.1}), (\protect\ref{2.2})}

\subsubsection{Existence and uniqueness}

First, we state the existence and uniqueness of the solution of the
forward problem (\ref{2.1}), (\ref{2.2}). For brevity we consider
only the case $\mathbf{x}_{0}\notin \overline{\Omega },$ since this
is the case of our Inverse Problem. Let $K_{p}\left( z\right) ,z\in
\mathbb{R},p\geq 0$ be
the Macdonald function. It is well known \cite{bib1} that for $y\in \mathbb{R%
}$
\begin{equation}
K_{p}\left( y\right) =\frac{\sqrt{\pi }}{\sqrt{2y}}e^{-y}\left(
1+O\left( \frac{1}{y}\right) \right) ,y\rightarrow \infty .
\label{2.5}
\end{equation}

\begin{theorem}
\label{th:2.1} Let $\Omega \subset \mathbb{R}^{2}$\emph{\ be the
above
bounded domain. Assume that the coefficient }$a\left( \mathbf{x}\right) $%
\emph{\ satisfies conditions (\ref{2.3}). Then for each source position }$%
x_{0}\in \mathbb{R}^{2}\diagdown \overline{\Omega }$\emph{\ there
exists
unique solution }$u\left( \mathbf{x,x}_{0}\right) $\emph{\ of the problem (%
\ref{2.1}), (\ref{2.2}) such that  }%
\begin{equation}
u\left( \mathbf{x,x}_{0}\right) =\frac{1}{2\pi }K_{0}\left(
k\left\vert
\mathbf{x-x}_{0}\right\vert \right) +\widehat{u}\left( \mathbf{x,x}%
_{0}\right) :=u_{0}\left( \mathbf{x-x}_{0}\right) +\widehat{u}\left( \mathbf{%
x,x}_{0}\right) ,  \label{2.6}
\end{equation}%
\emph{where the function }$u_{0}$\emph{\ is the fundamental solution
of equation (\ref{2.1}) with }$a\left( \mathbf{x}\right) \equiv
k^{2},$\emph{\ the function }$\widehat{u}$\emph{\ satisfies
(\ref{2.2}), }$\widehat{u}\in H^{2}\left( \mathbb{R}^{2}\right)
$\emph{\ and }$\widehat{u}\in C^{2+\alpha
}\left( \mathbb{R}^{2}\right) .$\emph{\ In addition, }$u\left( \mathbf{x,x}%
_{0}\right) >0,\forall x\in \overline{\Omega }.$
\end{theorem}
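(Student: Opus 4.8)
The plan is to construct $u$ by writing $u = u_0 + \widehat{u}$ and deriving an equation for the correction term $\widehat{u}$, which will be amenable to standard elliptic theory since it has a smooth right-hand side and decays at infinity. First I would recall that $u_0(\mathbf{x}-\mathbf{x}_0) = \frac{1}{2\pi} K_0(k|\mathbf{x}-\mathbf{x}_0|)$ is the fundamental solution of $\Delta v - k^2 v = -\delta(\mathbf{x}-\mathbf{x}_0)$ in $\mathbb{R}^2$; this is classical, and the asymptotics (\ref{2.5}) show $u_0$ and its derivatives decay exponentially at infinity. Substituting $u = u_0 + \widehat{u}$ into (\ref{2.1}) and using $\Delta u_0 - k^2 u_0 = -\delta(\mathbf{x}-\mathbf{x}_0)$, the delta functions cancel and one obtains
\begin{equation}
\Delta \widehat{u} - a(\mathbf{x})\widehat{u} = \left(a(\mathbf{x}) - k^2\right) u_0(\mathbf{x}-\mathbf{x}_0), \qquad \mathbf{x}\in\mathbb{R}^2,
\end{equation}
with $\widehat{u}\to 0$ at infinity. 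Since $\mathbf{x}_0\notin\overline{\Omega}$ and $a-k^2$ is supported in $\overline{\Omega}$ by (\ref{2.3}), the right-hand side $f:=(a-k^2)u_0$ is a compactly supported $C^1$ function (no singularity, because the singularity of $u_0$ sits at $\mathbf{x}_0$, away from $\mathrm{supp}(a-k^2)$), hence $f\in L^2(\mathbb{R}^2)\cap C^{\alpha}(\mathbb{R}^2)$.

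Next I would solve this linear equation. Because $a\geq k^2>0$ everywhere, the operator $-\Delta + a$ is coercive on $H^1(\mathbb{R}^2)$: the bilinear form $B(v,w)=\int_{\mathbb{R}^2}(\nabla v\cdot\nabla w + a\,v\,w)$ satisfies $B(v,v)\geq \int(|\nabla v|^2 + k^2 v^2)\geq c\|v\|_{H^1}^2$, and it is bounded since $a$ is bounded (it equals $k^2$ outside the bounded set $\Omega$). So the Lax–Milgram theorem — which the authors signal they wish to use — yields a unique weak solution $\widehat{u}\in H^1(\mathbb{R}^2)$ of $B(\widehat{u},w) = -\int f w$ for all $w\in H^1(\mathbb{R}^2)$. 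Interior elliptic regularity then bootstraps: $\Delta\widehat{u} = a\widehat{u} - f \in L^2_{loc}$ gives $\widehat{u}\in H^2_{loc}$, and since $a\in C^1$, $f\in C^{\alpha}$, Schauder interior estimates give $\widehat{u}\in C^{2+\alpha}_{loc}$; globally, outside $\Omega$ the equation is $\Delta\widehat{u} - k^2\widehat{u} = 0$ with $\widehat{u}\in H^1$, which forces exponential decay of $\widehat{u}$ and all its derivatives at infinity (e.g. by representing $\widehat{u}$ via $K_0$-convolution against its boundary data on $\partial\Omega$, or by a Kelvin-transform / comparison argument), giving $\widehat{u}\in H^2(\mathbb{R}^2)\cap C^{2+\alpha}(\mathbb{R}^2)$ and verifying (\ref{2.2}). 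Uniqueness for the full problem follows because the difference of two solutions satisfies the homogeneous equation in $H^1$, on which $B$ is coercive, so it vanishes.

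Finally, for the positivity statement $u(\mathbf{x},\mathbf{x}_0)>0$ on $\overline{\Omega}$, I would argue by the maximum principle. The function $u$ solves $\Delta u - a u = 0$ in $\Omega$ (the source is outside $\overline{\Omega}$), with boundary values $u|_{\partial\Omega} = \varphi$. On $\mathbb{R}^2\setminus\overline{\Omega}$ one has $\Delta u - k^2 u = -\delta(\mathbf{x}-\mathbf{x}_0)$ with $u\to 0$ at infinity; since $K_0>0$ and the correction is controlled, a maximum-principle argument on the exterior domain (or an explicit representation of $u$ as a superposition of positive kernels, using that $u_0>0$ and that the remainder inherits positive sign from positive boundary data) shows $u>0$ on $\mathbb{R}^2\setminus\Omega$, in particular $\varphi = u|_{\partial\Omega}>0$. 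Then the strong maximum principle for $-\Delta + a$ with $a\geq 0$ applied in $\Omega$ — if $u$ attained a nonpositive minimum in the interior it would be constant, contradicting positivity on the boundary — gives $u>0$ in $\overline{\Omega}$. The main obstacle I anticipate is the careful handling of the exterior decay and the global $H^2$/$C^{2+\alpha}$ regularity: one must rule out slowly-decaying harmonic-type modes at infinity and patch the interior Schauder estimate with the exterior exponential-decay estimate, and the positivity on $\partial\Omega$ must be nailed down rigorously rather than taken for granted, since the Inverse Problem later relies on $\varphi>0$ to take logarithms.
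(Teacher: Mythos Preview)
The paper does not actually prove Theorem~\ref{th:2.1}: immediately after the statement it says ``A similar result for the 3-D case was proven in \cite{bibBook}, see Theorem 2.7.2 in this reference. Hence, we leave out the proof of Theorem 2.1 for brevity.'' So there is no in-paper argument to compare against. Your outline is the standard one and is almost certainly what the cited 3-D proof does: split off the explicit fundamental solution $u_0$, reduce to a coercive problem $(-\Delta+a)\widehat{u}=-f$ with compactly supported smooth $f$, invoke Lax--Milgram on $H^1(\mathbb{R}^2)$, and bootstrap regularity via interior elliptic/Schauder estimates plus exponential decay in the exterior where the equation is $(-\Delta+k^2)\widehat{u}=0$.

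One small simplification worth noting for the positivity step: rather than splitting into exterior and interior and first arguing that $u|_{\partial\Omega}>0$, you can run the maximum principle once on all of $\mathbb{R}^2\setminus\{\mathbf{x}_0\}$. There $u$ is a classical solution of $\Delta u-a u=0$ with $a\geq k^2>0$, it tends to $+\infty$ at $\mathbf{x}_0$ (from the $K_0$ singularity) and to $0$ at infinity. If $u$ were negative somewhere, its infimum would be a strictly negative value attained at some interior point $\mathbf{x}^*$, whence $\Delta u(\mathbf{x}^*)\geq 0$ and $a(\mathbf{x}^*)u(\mathbf{x}^*)<0$, contradicting the equation; hence $u\geq 0$, and the strong maximum principle upgrades this to $u>0$ on $\mathbb{R}^2\setminus\{\mathbf{x}_0\}$, in particular on $\overline{\Omega}$. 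This avoids the somewhat delicate exterior step you flagged as an obstacle.
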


A similar result for the 3-D case was proven in \cite{bibBook}, see
Theorem 2.7.2 in this reference. Hence, we leave out the proof of
Theorem 2.1 for brevity.

\subsubsection{The asymptotic behavior at $\left| \mathbf{x}_{0}\right|
\rightarrow \infty $}

It follows from (\ref{2.5}) and (\ref{2.6}) that the asymptotic
behavior of the function $u_{0}\left( \mathbf{x-x}_{0}\right) $ is
\begin{eqnarray*}
u_{0}\left( \mathbf{x-x}_{0}\right) &=&w_{0}\left( \left\vert \mathbf{x}%
_{0}\right\vert \right) \left( 1+O\left( \frac{1}{\left\vert \mathbf{x}%
_{0}\right\vert }\right) \right) ,\left\vert
\mathbf{x}_{0}\right\vert
\rightarrow \infty , \\
w_{0}\left( \left\vert \mathbf{x}_{0}\right\vert \right) &=&\frac{%
e^{-k\left\vert \mathbf{x}_{0}\right\vert }}{2\sqrt{2\pi \left\vert \mathbf{x%
}_{0}\right\vert }}.
\end{eqnarray*}%
Denote $b\left( \mathbf{x}\right) =a\left( \mathbf{x}\right)
-k^{2}$. Then
by (\ref{2.3}) $b\left( \mathbf{x}\right) =0$ for $\mathbf{x}\in \mathbb{R}%
^{2}\diagdown \Omega $. Let $M_{1}$ be a positive constant. Denote
\begin{equation*}
B\left( M_{1}\right) =\left\{ b\in C^{1}\left( \mathbb{R}^{2}\right)
:\left\Vert b\right\Vert _{C^{1}\left( \overline{\Omega }\right)
}\leq M_{1},b\left( \mathbf{x}\right) \geq 0,b\left(
\mathbf{x}\right) =0\text{ for }\mathbf{x}\in
\mathbb{R}^{2}\diagdown \Omega \right\} .
\end{equation*}%
Also, let the function $p_{\infty }\left( \mathbf{x}\right) $
satisfies conditions
\begin{equation}
p_{\infty }\left( \mathbf{x}\right) \in C^{2+\alpha }\left(
\left\vert \mathbf{x}\right\vert \leq R\right) ,\forall
R>0,p_{\infty }\in H^{2}\left( \mathbb{R}^{2}\right) .  \label{2.25}
\end{equation}%
and be the solution of the following problem
\begin{eqnarray}
\Delta p_{\infty }-k^{2}p_{\infty }-b\left( \mathbf{x}\right)
p_{\infty }
&=&b\left( \mathbf{x}\right) ,\mathbf{x}\in \mathbb{R}^{2},  \label{2.26} \\
\underset{\left\vert x\right\vert \rightarrow \infty }{\lim
}p_{\infty }\left( \mathbf{x}\right) &=&0.  \label{2.27}
\end{eqnarray}%
The uniqueness and existence of the solution of the problem (\ref{2.25})-(%
\ref{2.27}) are similar to these of Theorem \ref{th:2.1}.

\begin{lemma}
\label{Le:2.1} $1+p_{\infty }>0.$
\end{lemma}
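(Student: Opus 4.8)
The plan is to reduce the statement to a pointwise application of the (elementary) maximum principle after the substitution $v:=1+p_{\infty }$. First I would check by a one-line computation from \eqref{2.26} that $v$ solves
\begin{equation*}
\Delta v-\left( k^{2}+b\left( \mathbf{x}\right) \right) v=-k^{2},\qquad \mathbf{x}\in \mathbb{R}^{2},
\end{equation*}
indeed, $\Delta v-k^{2}v-bv=\left( \Delta p_{\infty }-k^{2}p_{\infty }-bp_{\infty }\right) -k^{2}-b=b-k^{2}-b=-k^{2}$. By \eqref{2.25} the function $p_{\infty }$ is continuous on all of $\mathbb{R}^{2}$, and by \eqref{2.27} it tends to $0$ at infinity; hence $v$ is continuous and $v\left( \mathbf{x}\right) \rightarrow 1$ as $\left\vert \mathbf{x}\right\vert \rightarrow \infty $. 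Consequently there is $R>0$ with $v>1/2$ on $\left\{ \left\vert \mathbf{x}\right\vert >R\right\} $, so that $m:=\inf_{\mathbb{R}^{2}}v=\min_{\left\vert \mathbf{x}\right\vert \leq R}v$ is attained at some point $\mathbf{x}^{\ast }$ with $\left\vert \mathbf{x}^{\ast }\right\vert \leq R$.

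At this interior minimizer $\Delta v\left( \mathbf{x}^{\ast }\right) \geq 0$, since the Hessian of the $C^{2}$ function $v$ is positive semidefinite there. Substituting into the equation gives $\left( k^{2}+b\left( \mathbf{x}^{\ast }\right) \right) v\left( \mathbf{x}^{\ast }\right) =\Delta v\left( \mathbf{x}^{\ast }\right) +k^{2}\geq k^{2}>0$. Because $b\geq 0$ (recall $b\in B\left( M_{1}\right) $), the coefficient $k^{2}+b\left( \mathbf{x}^{\ast }\right) \geq k^{2}>0$, and therefore $v\left( \mathbf{x}^{\ast }\right) >0$. Thus $1+p_{\infty }\left( \mathbf{x}\right) \geq m=v\left( \mathbf{x}^{\ast }\right) >0$ for every $\mathbf{x}$, which is the assertion.

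The argument is essentially routine; the only step deserving care is guaranteeing that the infimum of $v$ over the unbounded domain $\mathbb{R}^{2}$ is genuinely attained. This is exactly where the decay property \eqref{2.27} and the interior regularity \eqref{2.25} are used: the former confines any candidate minimizer to a fixed compact set, and the latter legitimizes the pointwise second-derivative test. An alternative, equivalent route would be to apply a Hopf-type maximum principle to $-v$ on a large ball $B_{R}$ with $R$ chosen so that $v\geq 1/2$ on $\partial B_{R}$; the compactness phrasing above is preferred since it avoids invoking a maximum principle on a domain with boundary.
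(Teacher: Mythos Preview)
Your argument is correct and follows essentially the same route as the paper: the substitution $v=1+p_{\infty}$, the derived equation $\Delta v-(k^{2}+b)v=-k^{2}$, and the use of the decay $v\to 1$ at infinity to localize to a large ball are exactly what the paper does. The only cosmetic difference is that the paper simply cites the maximum principle from Gilbarg--Trudinger on $\{|\mathbf{x}|<R\}$ as a black box, whereas you spell out the elementary second-derivative test at the global minimizer; your version is a touch more self-contained but otherwise identical in substance.
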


\begin{proof}
Let $\widetilde{p}=1+p_{\infty }.$ Then
\begin{equation}
\Delta \widetilde{p}-k^{2}\widetilde{p}-b\left( \mathbf{x}\right) \widetilde{%
p}=-k^{2},\text{ }\lim_{\left\vert \mathbf{x}\right\vert \rightarrow \infty }%
\widetilde{p}\left( \mathbf{x}\right) =1.  \label{2.28}
\end{equation}%
Consider a sufficiently large number $R>0$ such that
$\widetilde{p}\left(
\mathbf{x}\right) \geq 1/2$ for $\mathbf{x\in }\left\{ \left\vert \mathbf{x}%
\right\vert \geq R\right\} .$ Then the maximum principle
\cite{bib16} applied to equation (\ref{2.28}) for $\mathbf{x}\in
\left\{ \left\vert
\mathbf{x}\right\vert <R\right\} $ shows that $\widetilde{p}\left( \mathbf{x}%
\right) >0$ in $\left\{ \left\vert \mathbf{x}\right\vert <R\right\}
$.
\end{proof}

\begin{lemma}\label{Le:2.2}\cite{bib18}
\emph{Let the function }$b\in B\left( M_{1}\right) $\emph{. Then
there exists a constant }$M_{2}\left( M_{1},\Omega \right)
>0$\emph{\ such that}
\begin{equation*}
\left\Vert \ln u\left( \mathbf{x,x}_{0}\right) -\ln w_{0}\left(
\left\vert
\mathbf{x}_{0}\right\vert \right) -\ln \left( 1+p_{\infty }\left( \mathbf{x}%
\right) \right) \right\Vert _{C^{2}\left( \overline{\Omega }\right)
}\leq
\frac{M_{2}\left( M_{1},\Omega \right) }{\left\vert \mathbf{x}%
_{0}\right\vert },
\end{equation*}
\begin{equation*}
\mathbf{x}_{0}\in \left\{ \left\vert \mathbf{x}_{0}\right\vert
>1\right\} \cap \left( \mathbb{R}^{2}\diagdown \overline{\Omega
}\right) ,\mathbf{x}\in \overline{\Omega }.
\end{equation*}
\emph{\ }
\end{lemma}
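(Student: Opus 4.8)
The plan is to compare $u(\mathbf{x},\mathbf{x}_{0})$ against the product $w_{0}(|\mathbf{x}_{0}|)(1+p_{\infty}(\mathbf{x}))$ and show that their ratio is $1+O(1/|\mathbf{x}_{0}|)$ in the $C^{2}(\overline{\Omega})$-norm, so that taking logarithms yields the claimed bound. First I would write $u = u_{0}(\mathbf{x}-\mathbf{x}_{0}) + \widehat{u}(\mathbf{x},\mathbf{x}_{0})$ as in Theorem \ref{th:2.1}, and use the asymptotics of $u_{0}$ from the preceding subsection to write $u_{0}(\mathbf{x}-\mathbf{x}_{0}) = w_{0}(|\mathbf{x}_{0}|)(1 + r_{0}(\mathbf{x},\mathbf{x}_{0}))$ with $\|r_{0}\|_{C^{2}(\overline{\Omega})} \leq C/|\mathbf{x}_{0}|$; here I must be a little careful because the stated expansion $u_{0}(\mathbf{x}-\mathbf{x}_{0}) = w_{0}(|\mathbf{x}_{0}|)(1+O(1/|\mathbf{x}_{0}|))$ is only pointwise, so I would re-derive it together with its $\mathbf{x}$-derivatives by expanding $|\mathbf{x}-\mathbf{x}_{0}| = |\mathbf{x}_{0}| - \mathbf{x}\cdot\mathbf{x}_{0}/|\mathbf{x}_{0}| + O(1/|\mathbf{x}_{0}|)$ uniformly for $\mathbf{x}\in\overline{\Omega}$ and using (\ref{2.5}) for $K_{0}$ and $K_{1}$.

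Next I would factor out $w_{0}(|\mathbf{x}_{0}|)$ from the whole equation. Set $v(\mathbf{x},\mathbf{x}_{0}) = u(\mathbf{x},\mathbf{x}_{0})/w_{0}(|\mathbf{x}_{0}|)$. Inside $\Omega$, $u$ satisfies $\Delta u - a(\mathbf{x})u = 0$ (the source is outside $\overline{\Omega}$), hence $\Delta v - k^{2}v - b(\mathbf{x})v = 0$ in $\Omega$ as well, with Dirichlet data on $\partial\Omega$ given by $v|_{\partial\Omega} = u_{0}(\mathbf{x}-\mathbf{x}_{0})/w_{0}(|\mathbf{x}_{0}|) + \widehat{u}/w_{0}(|\mathbf{x}_{0}|)$. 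The key input is that $\widehat{u}$ is genuinely lower order: from Theorem \ref{th:2.1} and the Green's representation of $\widehat{u}$ (it solves $\Delta\widehat{u} - a\widehat{u} = -(a-k^{2})u_{0}$ with the source supported in $\overline{\Omega}$) one obtains $\|\widehat{u}(\cdot,\mathbf{x}_{0})\|_{C^{2+\alpha}(\overline{\Omega})} \leq C(M_{1},\Omega)\, w_{0}(|\mathbf{x}_{0}|)/|\mathbf{x}_{0}|$, since the right-hand side $(a-k^{2})u_{0} = b\, u_{0}$ is $O(w_{0}(|\mathbf{x}_{0}|))$ in $C^{\alpha}(\overline{\Omega})$ and a factor $1/|\mathbf{x}_{0}|$ is gained from the variation of $u_{0}$ over $\Omega$ relative to the constant $w_{0}$ — this last gain is what makes the lemma nontrivial and is essentially the content of Theorem 2.7.2 of \cite{bibBook} in 3-D. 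Consequently $v|_{\partial\Omega} = (1+p_{\infty}(\mathbf{x}))|_{\partial\Omega} + O(1/|\mathbf{x}_{0}|)$ in $C^{2}$-norm on $\partial\Omega$, where I have used that $1+p_{\infty}$ restricted to $\partial\Omega$ equals the Dirichlet trace of the leading term.

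Then I would subtract: $g := v - (1+p_{\infty})$ solves $\Delta g - k^{2}g - b(\mathbf{x})g = 0$ in $\Omega$ (both $v$ and $\widetilde{p}=1+p_{\infty}$ satisfy the same homogeneous equation there — note (\ref{2.26}) gives $\Delta\widetilde{p} - k^{2}\widetilde{p} - b\widetilde{p}=0$) with Dirichlet data of size $O(1/|\mathbf{x}_{0}|)$ in $C^{2+\alpha}(\partial\Omega)$. Since $b\geq 0$ and $k^{2}>0$, the operator $\Delta - k^{2} - b$ satisfies the maximum principle, so by Schauder elliptic estimates up to the boundary $\|g\|_{C^{2+\alpha}(\overline{\Omega})} \leq C(M_{1},\Omega)\,\|g\|_{C^{2+\alpha}(\partial\Omega)} \leq C(M_{1},\Omega)/|\mathbf{x}_{0}|$. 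Finally, since $v>0$ on $\overline{\Omega}$ (Theorem \ref{th:2.1}) and $1+p_{\infty}>0$ on $\overline{\Omega}$ (Lemma \ref{Le:2.1}), both are bounded below by a positive constant depending only on $M_{1},\Omega$ (uniformly in $\mathbf{x}_{0}$ for $|\mathbf{x}_{0}|>1$), so $\ln$ is Lipschitz in $C^{2}$ on the relevant range; applying $\ln v = \ln(1+p_{\infty}) + \ln(1 + g/(1+p_{\infty}))$ and noting $\ln v = \ln u - \ln w_{0}(|\mathbf{x}_{0}|)$ gives exactly the stated estimate with $M_{2}(M_{1},\Omega)$. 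The main obstacle is the second paragraph: proving the $1/|\mathbf{x}_{0}|$ decay of $\widehat{u}$ (equivalently, that the Dirichlet data of $g$ is this small) in the $C^{2}$ rather than merely $C^{0}$ norm — this requires the uniform derivative asymptotics of $K_{0}$ together with interior/boundary Schauder bounds, and is where the hypothesis $b\in B(M_{1})$ enters to keep all constants uniform.
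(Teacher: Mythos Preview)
The paper does not prove this lemma at all; it is simply quoted from \cite{bib18}, so there is no argument in the present paper to compare yours against.

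On its own merits your outline has the right shape but contains a concrete error that breaks the key step. You assert that $\widetilde{p}=1+p_{\infty}$ satisfies the homogeneous equation $\Delta\widetilde{p}-k^{2}\widetilde{p}-b\widetilde{p}=0$ in $\Omega$, but equation (\ref{2.28}) gives $\Delta\widetilde{p}-k^{2}\widetilde{p}-b\widetilde{p}=-k^{2}$. Consequently $g=v-(1+p_{\infty})$ solves $\Delta g-ag=k^{2}$ in $\Omega$, not the homogeneous equation, and a Schauder bound from small Dirichlet data alone cannot force $g$ to be $O(1/|\mathbf{x}_{0}|)$: the particular solution with zero boundary data is a fixed nonzero function independent of $\mathbf{x}_{0}$. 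A related misstep is the claim $\|\widehat{u}\|_{C^{2+\alpha}(\overline{\Omega})}\leq C\,w_{0}(|\mathbf{x}_{0}|)/|\mathbf{x}_{0}|$. Since $\widehat{u}$ solves $\Delta\widehat{u}-a\widehat{u}=b\,u_{0}$ with right-hand side of size $O(w_{0})$ on $\overline{\Omega}$, one only gets $\widehat{u}=O(w_{0})$; in fact $\widehat{u}/w_{0}\to p_{\infty}$, which is exactly where the term $\ln(1+p_{\infty})$ in the lemma originates, so $\widehat{u}$ cannot be lower order.

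The repair is to subtract at the level of $\widehat{u}$ on all of $\mathbb{R}^{2}$ rather than at the level of $v$ on $\Omega$: the difference $\widehat{u}/w_{0}-p_{\infty}$ satisfies $\Delta(\cdot)-a(\cdot)=b\,(u_{0}/w_{0}-1)$ in $\mathbb{R}^{2}$ and vanishes at infinity, with right-hand side supported in $\overline{\Omega}$ and of size $O(1/|\mathbf{x}_{0}|)$ there (granting the paper's asymptotic $u_{0}=w_{0}(1+O(1/|\mathbf{x}_{0}|))$ in the appropriate norm). Global elliptic estimates then yield $\|\widehat{u}/w_{0}-p_{\infty}\|_{C^{2}(\overline{\Omega})}=O(1/|\mathbf{x}_{0}|)$, whence $u/w_{0}=(1+p_{\infty})+O(1/|\mathbf{x}_{0}|)$ in $C^{2}(\overline{\Omega})$, and your logarithm step finishes the argument.
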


\section{The Numerical Method}

Since we can put the origin on\ the straight line $L$, if necessary,
then without any loss of the generality we can set $s:=\left\vert
\mathbf{x}_{0}\right\vert$, assuming that only the parameter $s$
changes when the source $x_{0}$ runs along $\Gamma \subset L$.
Denote $u\left( \mathbf{x},s\right) :=u\left( \mathbf{x}%
,\mathbf{x}_{0}\right) ,\mathbf{x}\in \Omega ,\mathbf{x}_{0}\in
\Gamma .$
Since $\Gamma \cap \overline{\Omega }=\varnothing $ and the point source $%
\mathbf{x}_{0}\in \Gamma $, then $\mathbf{x}_{0}\notin
\overline{\Omega }$. Since by Theorem \ref{th:2.1} $u\left(
\mathbf{x},s\right) >0,\forall
\mathbf{x}\in \overline{\Omega }$, then we can consider the function $%
\widetilde{w}\left( \mathbf{x},s\right) =\ln u\left(
\mathbf{x},s\right) $ for $\mathbf{x}\in \Omega $. We obtain from
(\ref{2.1}) and (\ref{2.4})
\begin{eqnarray}
\Delta \widetilde{w}+\left\vert \nabla \widetilde{w}\right\vert ^{2}
&=&a\left( \mathbf{x}\right) \text{ in }\Omega ,  \label{3.1} \\
\widetilde{w}\left( \mathbf{x},s\right)  &=&\varphi _{1}\left( \mathbf{x}%
,s\right) ,\forall \left( \mathbf{x},s\right) \in \partial \Omega \times %
\left[ \underline{s},\bar{s}\right] ,  \label{3.2}
\end{eqnarray}%
where $\varphi _{1}=\ln \varphi $ and $\underline{s},\bar{s}$ are
two positive numbers, which should be chosen in numerical
experiments.

\subsection{The integral differential equation}

We now \textquotedblleft eliminate\textquotedblright\ the coefficient $%
a\left( \mathbf{x}\right) $ from equation (\ref{3.1}) via the
differentiation with respect to $s$. However, to make sure that the
resulting the so-called \textquotedblleft tail
function\textquotedblright\ is small, we use the above mentioned
(Introduction) special effort of the paper \cite{bib18}. Namely, we
divide (\ref{3.1}) by $s^{p},p>0.$ In
principle, any number $p>0$ can be used. But since in computations we took $%
p=2,$ both here and in \cite{bib18}, then we use below only $p=2,$
for the
sake of definiteness. Denote $w\left( \mathbf{x},s\right) =$ $\widetilde{w}%
\left( \mathbf{x},s\right) /s^{2}$. By (\ref{3.1}) equation for the
function
$w\left( \mathbf{x},s\right) $ is%
\begin{equation}
\Delta w+s^{2}\left\vert \nabla w\right\vert ^{2}=\frac{a\left( \mathbf{x}%
\right) }{s^{2}}.  \label{3.21}
\end{equation}%
Next, let $q\left( \mathbf{x},s\right) :=\partial _{s}w\left( \mathbf{x}%
,s\right) =\partial _{s}\left( s^{-2}\ln u\left( \mathbf{x},s\right)
\right) ,$ for $s\in \left[ \underline{s},\overline{s}\right] $.
Then
\begin{equation}
w\left( \mathbf{x},s\right) =-\int\limits_{s}^{\bar{s}}q\left( \mathbf{x}%
,\tau \right) d\tau +T\left( \mathbf{x}\right) ,\mathbf{x}\in \Omega ,s\in %
\left[ \underline{s},\overline{s}\right] ,  \label{3.3}
\end{equation}%
where $T\left( \mathbf{x}\right) $ is the so-called
\textquotedblleft tail function\textquotedblright . The exact
expression for this function is of course $T\left( \mathbf{x}\right)
=w\left( \mathbf{x},\overline{s}\right) $. However, since the
function $w\left( \mathbf{x},\overline{s}\right) $ is unknown, we
will use an approximation for the tail function, see subsection 5.2
as well as \cite{bib18}. By the Tikhonov concept for ill-posed
problems \cite{bib5,bibBook,bib25}, one should have some a priori
information about the solution of an ill-posed problem. Thus, we can
assume the knowledge of a constant $M_{1}>0$ such that the function
$a\left( \mathbf{x}\right) -k^{2}=b\left( \mathbf{x}\right) \in
B\left( M_{1}\right) $. Hence, it follows from Lemma \ref{Le:2.2}
that
\begin{eqnarray}
&&T\left( \mathbf{x},\overline{s}\right) =\frac{\ln w_{0}\left( \overline{s}%
\right) }{\overline{s}^{2}}+\frac{\ln \left( 1+p_{\infty }\left( \mathbf{x}%
\right) \right) }{\overline{s}^{2}}+\frac{g\left( \mathbf{x},\overline{s}%
\right) }{\overline{s}^{3}},\text{ }\mathbf{x}\in \Omega ,\text{
}\forall
\overline{s}>1,  \label{3.4} \\
&&\left\Vert g\left( \mathbf{x},\overline{s}\right) \right\Vert
_{C^{1}\left( \overline{\Omega }\right) }\leq M_{2}\left(
M_{1},\Omega \right) ,\text{ }\forall \overline{s}>1,\text{ }\forall
b\in B\left( M_{1}\right) ,  \notag
\end{eqnarray}%
where the number $M_{2}\left( M_{1},\Omega \right) $ is independent on $%
\overline{s}.$ Differentiating (\ref{3.21}) with respect to $s,$ we
obtain the following nonlinear integral differential equation for
the function $q$ \cite{bib18}
\begin{equation}
\Delta q-\frac{2}{s}\int\limits_{s}^{\bar{s}}\Delta q\left(
\mathbf{x},\tau \right) d\tau -2s^{2}\nabla
q\int\limits_{s}^{\bar{s}}\triangledown q\left( \mathbf{x},\tau
\right) d\tau   \label{3.5}
\end{equation}%
\begin{equation*}
+4s\left( -\int\limits_{s}^{\bar{s}}\triangledown q\left(
\mathbf{x},\tau
\right) d\tau +\triangledown T\right) ^{2}+2s^{2}\triangledown T\nabla q=-%
\frac{2}{s}\Delta T.
\end{equation*}%
In addition, (\ref{3.2}) implies that
\begin{eqnarray}
q\left( \mathbf{x},s\right)  &=&\psi \left( \mathbf{x},s\right) ,\text{ }%
\forall \left( \mathbf{x},s\right) \in \partial \Omega \times \left[
\underline{s},\bar{s}\right] ,  \label{3.6} \\
\psi \left( \mathbf{x},s\right)  &=&\partial _{s}\left( s^{-2}\ln
\varphi \left( \mathbf{x},s\right) \right) .  \label{3.7}
\end{eqnarray}

If we approximate well both functions $q$ and $T$ together with
their derivatives up to the second order, then we can also
approximate well the target coefficient $a\left( \mathbf{x}\right) $
via backwards calculations, see (\ref{5.1}). Therefore, the main
questions now is:\emph{\ How to
approximate well both functions }$q$\emph{\ and }$T$\emph{\ using (\ref{3.3}%
)-(\ref{3.7})? }

\subsection{Layer stripping with respect to the source position}

In this subsection we present a layer stripping procedure with
respect to $s$ for approximating the function $q$, assuming that the
function $T$ is known. Usually the layer stripping procedure is
applied with respect to a spatial variable. However, sometimes the
presence of a differential operator with respect to this variable in
the underlying PDE results in the instability, since computing
derivatives is an unstable procedure. The reason why our layer
stripping procedure is stable is that we do not have a differential
operator with respect to $s$ in our PDE.

We approximate the function $q\left( \mathbf{x},s\right) $ as a
piecewise constant function with respect to the source position $s$.
We assume that
there exists a partition $\underline{s}=s_{N}<s_{N-1}<\ldots <s_{1}<s_{0}=%
\bar{s},s_{i-1}-s_{i}=h$ of the interval $\left[
\underline{s},\bar{s}\right] $ with a sufficiently small step size
$h$ such that
\begin{equation}
q\left( \mathbf{x},s\right) =q_{n}\left( \mathbf{x}\right) \text{ for }s\in %
\left[ s_{n},s_{n-1}\right) ,n\geq 1;\text{ }q_{0}:\equiv 0.
\label{3.8}
\end{equation}%
Hence,
\begin{equation*}
\int\limits_{s}^{\bar{s}}q\left( \mathbf{x},s\right) ds=\left(
s_{n-1}-s\right) q_{n}\left( \mathbf{x}\right)
+h\sum\limits_{j=0}^{n-1}q_{j}\left( \mathbf{x}\right) .
\end{equation*}%
Let $\psi _{n}\left( \mathbf{x}\right) $ be the average of the function $%
\psi \left( \mathbf{x}\right) $ over the interval $\left(
s_{n},s_{n-1}\right) $. Then we approximate the boundary condition (\ref{3.6}%
) as a piecewise constant function with respect to $s$,
\begin{equation}
q_{n}\left( \mathbf{x}\right) =\psi _{n}\left( \mathbf{x}\right) ,\mathbf{x}%
\in \partial \Omega .  \label{3.9}
\end{equation}%
Using (\ref{3.8}), integrate equation (\ref{3.5}) with respect to $s\in %
\left[ s_{n},s_{n-1}\right) $. We obtain for $n\geq 1$
\begin{equation}
\begin{array}{c}
\Delta q_{n}+A_{2,n}\left( h\sum\limits_{j=0}^{n-1}\triangledown
q_{j}-\triangledown T\right) \triangledown q_{n}-A_{1,n}\left(
\triangledown
q_{n}\right) ^{2}= \\
A_{3,n}h\sum\limits_{j=1}^{n-1}\Delta q_{j}+A_{4,n}\left(
h\sum\limits_{j=0}^{n-1}\triangledown q_{j}-\triangledown T\right)
^{2}-A_{3,n}\Delta T,%
\end{array}
\label{3.10}
\end{equation}%
where $A_{k,n},k=1,...,4$ are certain coefficients, see \cite{bib18}
for exact formulas for them. Let
\begin{equation}
\overline{s}>2,h\in \left( 0,1\right) .  \label{3.101}
\end{equation}%
Then one can prove that $\left\vert A_{i,n}\right\vert \leq 8\overline{s}%
^{2},i=2,3,4,$
\begin{equation}
\left\vert A_{1,n}\right\vert \leq 2\overline{s}^{2}h.
\label{3.102}
\end{equation}%
Assuming that $h$ is sufficiently small and using (\ref{3.102}), we
assume below that
\begin{equation}
A_{1,n}\left( \triangledown q_{n}\right) ^{2}:=0\text{ in
(\ref{3.10}).} \label{3.103}
\end{equation}

It is convenient for our convergence analysis to formulate the
Dirichlet
boundary value problem (\ref{3.9}), (\ref{3.10}) in the weak form. Denote $%
H_{0}^{1}(\Omega) =\{ u\in H^{1}(\Omega) :u\mid _{\partial \Omega
}=0\}$. Assume that there exists functions $\Psi _{n}$ such that
\begin{equation}
\Psi _{n}\in H^{2}\left( \Omega \right) \text{ and }\Psi _{n}\mathbf{\mid }%
_{\partial \Omega }=\psi _{n}\left( \mathbf{x}\right) ,n\in \left[
1,N\right] \text{.}  \label{3.11}
\end{equation}%
Consider the function $p_{n}=q_{n}-\Psi _{n}\in H_{0}^{1}\left(
\Omega \right) .$ Then we obtain an obvious analog of equation
(\ref{3.10}) for the function $p_{n}.$ Multiply both sides of the
latter equation by an arbitrary
function $\eta \in H_{0}^{1}\left( \Omega \right) $ and integrate over $%
\Omega $ using integration by parts and (\ref{3.103}). We obtain
\begin{eqnarray}
&&\int\limits_{\Omega }\nabla p_{n}\nabla \eta d\mathbf{x}%
-A_{2,n}\int\limits_{\Omega }\left(
h\sum\limits_{j=0}^{n-1}\triangledown q_{j}-\triangledown T\right)
\triangledown p_{n}\cdot \eta d\mathbf{x}
\notag \\
&=&-\int\limits_{\Omega }\left\{ \nabla \Psi _{n}\nabla \eta +\left[
A_{2,n}\left( h\sum\limits_{j=0}^{n-1}\triangledown
q_{j}-\triangledown T\right) \triangledown \Psi _{n}+f_{n}\right]
\eta \right\} d\mathbf{x} \label{3.22}
\end{eqnarray}%
\begin{equation*}
\mathbf{+}A_{3,n}\int\limits_{\Omega }h\sum\limits_{j=0}^{n-1}\nabla
q_{j}\nabla \eta d\mathbf{x,}\text{ }p_{n}\in H_{0}^{1}\left( \Omega
\right) ,\forall \eta \in H_{0}^{1}\left( \Omega \right) ,
\end{equation*}%
\begin{equation*}
f_{n}\left( \mathbf{x}\right) =A_{4,n}\left(
h\sum\limits_{j=0}^{n-1}\triangledown q_{j}-\triangledown T\right)
^{2}-A_{3,n}\Delta T.
\end{equation*}%
Hence, the function $q_{n}\in H^{1}\left( \Omega \right) $ is a weak
solution of the problem (\ref{3.9}), (\ref{3.10}) if and only if the
function $p_{n}\in H_{0}^{1}\left( \Omega \right) $ satisfies the
integral identity (\ref{3.22}). The question about existence and
uniqueness of the
weak solution of the problem (\ref{3.22}) is addressed in Theorem \ref%
{th:4.1}.

We now describe our algorithm of sequential solutions of boundary
value problems (\ref{3.9}), (\ref{3.10}) for $n=1,\ldots ,N$,
assuming that an approximation $T\left( \mathbf{x}\right) $ for the
tail function is found (see subsection 5.2 for the latter). Recall
that $q_{0}=0.$ Hence, we have:

\emph{Step }$n\in \left[ 1,N\right] $. Suppose that functions
$q_{1},\ldots ,q_{n-1}$ are computed. On this step we find the weak
solution of the Dirichlet boundary problem (\ref{3.9}), (\ref{3.10})
for the function $q_{n}$ via the FEM with triangular finite
elements.

\emph{Step} $N+1$. After functions $q_{1},\ldots ,q_{N}$ are
computed, the function $a\left( \mathbf{x}\right) $ is reconstructed
using backwards calculations at $n:=N$, i.e., for the lowest value
of $s:=s_{N}=\underline{s} $ as described in subsection 5.1.

\section{Convergence Analysis}

\subsection{Approximate global convergence}

The central question which was addressed in above cited publications \cite%
{bib18,bib22,bib23,bib24} was of constructing such a numerical
method for the CIP (\ref{2.1})-(\ref{2.4}), which would
simultaneously satisfy the following three conditions:

1. This method should deliver a good approximation for the exact
solution of this CIP without an a priori knowledge of a small
neighborhood of this solution.

2.\emph{\ }A theorem should be proven which would guarantee of
obtaining such an approximation.

3. A good numerical performance of this technique should be
demonstrated on computationally simulated data and, optionally, on
real data.

We use the word \textquotedblleft optionally", because it is usually
hard and expensive to actually collect real data. It is
\emph{enormously challenging} to construct such a numerical method.
Therefore, as it is often done in mathematical modeling,
\emph{conceptually}, our approach consists of the following six
steps:

\textbf{Step 1.} A reasonable approximate mathematical model is
proposed. The accuracy of this model cannot be rigorously estimated.

\textbf{Step 2.} A numerical method is developed, which works within
the framework of this model.

\textbf{Step 3.} A theorem is proven, which guarantees that, within
the framework of this model, the numerical method of Step 2 indeed
reaches a sufficiently small neighborhood of the exact solution.
Naturally, the smallness of this neighborhood depends on the level
of the error, both in the data and in some additional approximations
is sufficiently small.

\textbf{Step 4.} The numerical method of Step 2 is tested on
computationally simulated data.

\textbf{Step 5}. The numerical method of Step 2 is tested on real
data.

\textbf{Step 6.} Finally, if results of Steps 4 and 5 are good ones,
then that approximate mathematical model is proclaimed as a valid
one.

Thus, results of the current paper (Step 5) in combination with the
previous ones of \cite{bib18,bib22,bib23,bib24} lead to the positive
conclusion of Step 6. Step 6 is logical, because its condition is
that the resulting numerical method is proved to be effective. It is
sufficient to achieve that small neighborhood of the exact solution
after a finite (rather than infinite) number of iterations. We refer
to page 157 of the book \cite{bib14} where it is stated that the
number of iterations can be regarded as a regularization parameter
sometimes for an ill-posed problem. Still, in our computations both
for simulated and real data, classical convergence in the Cauchy
sense was achieved. These consideration lead to the following
definition of the approximate global convergence property.

\begin{definition}
\label{de:4.1} (approximate global convergence)
\cite{bibBook,bib20}. Consider a\textbf{\ }nonlinear ill-posed
problem $P$. Suppose that this
problem has a unique solution $x^{\ast }\in B$ for the noiseless data $%
y^{\ast },$ where \ $B$ is a Banach space with the norm $\left\Vert
\cdot \right\Vert _{B}.$ We call $x^{\ast }$ \textquotedblleft exact
solution\textquotedblright\ or \textquotedblleft correct
solution\textquotedblright . Suppose that a certain approximate
mathematical model $\overline{M}$ is proposed to solve the problem
$P$ numerically. Assume that, within the framework of the model
$\overline{M},$ this problem has unique exact solution
$x_{\overline{M}}^{\ast }.$ Also, let one of assumptions of the
model $\overline{M}$ be that $x_{M_{1}}^{\ast }=x^{\ast }. $
Consider an iterative numerical method for solving the problem $P$.
Suppose that this method produces a sequence of points $\left\{
x_{n}\right\} _{n=1}^{N}\subset B,$ where $N\in \left[ 1,\infty
\right) .$ Let the number $\theta \in \left( 0,1\right) .$ We call
this numerical
method \emph{approximately globally convergent of the level }$\theta $\emph{%
, }or shortly\emph{\ globally convergent}, if, within the framework
of the approximate model $\overline{M},$ a theorem is proven, which
guarantees that, without any \emph{a priori} knowledge of a
sufficiently small neighborhood of $x^{\ast },$ there exists a
number $\overline{N}\in \left[
1,N\right) $ such that%
\begin{equation}
\left\Vert x_{n}-x^{\ast }\right\Vert _{B}\leq \theta ,\forall n\in
\left[ \overline{N},N\right] .  \label{4.1}
\end{equation}%
Suppose that iterations are stopped at a certain number $k\in \left[
\overline{N},N\right] .$ Then the point $x_{k}$ is denoted as $%
x_{k}:=x_{glob}$ and is called \textquotedblleft the approximate
solution resulting from this method".
\end{definition}

With reference to the notion of the approximate mathematical model $%
\overline{M}$, as well as to the above Steps 1-6, it is worthy to
mention
here that one of the keys to the successful numerical implementation \cite%
{bib2} of the non-local reconstruction algorithm of \cite{bib30} was
the use of a certain approximate mathematical model. The same is
true for the two dimensional analog of the Gel'fand-Levitan-Krein
equation being applied in \cite{bib27} to an inverse hyperbolic
problem. Thus, it seems that whenever one is trying to construct an
efficient non-locally convergent numerical method for a truly
complicated nonlinear ill-posed problem, it is close to the
necessity to introduce some approximations which cannot be
rigorously justified and then work within the resulting approximate
mathematical model then. Analogously, although the Huygens-Fresnel
theory of optics is not rigorously supported by the Maxwell's system
(see section 8.1 of the classical textbook \cite{BW}), the
\textquotedblleft diffraction part" of the entire modern optical
industry is based on the Huygens-Fresnel optics.

\subsection{ Exact solution}

By one of statements of Introduction as well as by Definition 4.1,
we should assume the existence and uniqueness of an
\textquotedblleft ideal\textquotedblright\ exact solution $a^{\ast
}\left( \mathbf{x}\right) $ of our inverse problem for an
\textquotedblleft ideal\textquotedblright\
noiseless exact data $\varphi ^{\ast }\left( \mathbf{x,x}_{0}\right) $ in (%
\ref{2.4}).\ Next, in accordance with the regularization theory, one
should assume the presence of an error in the data of a small level
$\gamma >0$ and construct an approximate solution for this case.

Since the exact solution was defined in \cite{bib18}, we outline it
only briefly here for the convenience of the reader. Let the
function $a^{\ast }\left( \mathbf{x}\right) $ satisfying conditions
(\ref{2.3}) be the exact solution of our inverse problem for the
noiseless data $\varphi ^{\ast
}\left( \mathbf{x},\mathbf{x}_{0}\right) $ in (\ref{2.4}). We assume that $%
a^{\ast }\left( \mathbf{x}\right) $ is unique. Let the function
$u^{\ast
}\left( \mathbf{x},s\right) $ be the same as the function $u\left( \mathbf{x}%
,\mathbf{x}_{0}\right) $ in Theorem \ref{th:2.1}, but for the case
$a\left( \mathbf{x}\right) :=a^{\ast }\left( \mathbf{x}\right) $.
Denote
\begin{eqnarray*}
w^{\ast }\left( \mathbf{x},s\right)  &=&s^{-2}\ln u^{\ast }\left( \mathbf{x}%
,s\right) ,\text{ }q^{\ast }\left( \mathbf{x},s\right) =\partial
_{s}w^{\ast
}\left( \mathbf{x},s\right) , \\
T^{\ast }\left( \mathbf{x}\right)  &=&w^{\ast }\left( \mathbf{x},\overline{s}%
\right) .
\end{eqnarray*}%
The function $q^{\ast }$ satisfies the analogue of equation
(\ref{3.5}) with the boundary condition (\ref{3.6})
\begin{equation}
\begin{array}{c}
q^{\ast }\left( \mathbf{x},s\right) =\psi ^{\ast }\left(
\mathbf{x},s\right) ,\forall \left( \mathbf{x},s\right) \in \partial
\Omega \times \left[
\underline{s},\bar{s}\right] ,%
\end{array}
\label{4.2}
\end{equation}%
where by (\ref{3.7}) $\psi ^{\ast }\left( \mathbf{x},s\right)
=\partial
_{s}\left( s^{-2}\ln u^{\ast }\left( \mathbf{x},s\right) \right) $ for $%
\left( \mathbf{x},s\right) \in \partial \Omega \times \left[ \underline{s},%
\bar{s}\right] $. We call $q^{\ast }\left( \mathbf{x},s\right) $
\emph{the exact solution.} It should be noted that our real data
$\varphi \left( \mathbf{x,}s\right) $ are naturally given with a
random noise. It is well known that the differentiation of the noisy
data is an ill-posed problem \cite{bibBook,bib25}. However, because
of some limitations of our device, we work only with three values of
the parameter $s$ in our real data (subsection 5.2). Hence, we
simply calculate the $s-$ derivative via the finite difference.\ Our
numerical experience shows that this does not lead to a degradation
of our reconstruction results.

\subsection{Our approximate mathematical model}

In this model we basically assume that the upper bound for the tail
function can become sufficiently small independently on
$\overline{s}.$ In addition, since we calculate the tail function
$T\left( \mathbf{x},\overline{s}\right) $ separately from the
function $q$, we assume that the function $T\left(
\mathbf{x},\overline{s}\right) $ is given (but not the exact tail function $%
T^{\ast }\left( \mathbf{x},\overline{s}\right) ).$ Although the
smallness of $T$ is supported by (\ref{3.4}), the independence of
that upper bound of this function from $\overline{s}$ does not
follow from (\ref{3.4}). Still this is one of two assumptions of our
\emph{approximate} mathematical model. Following the concept of
Steps 1-6 of subsection 4.1, we now verify this model on real data.

More precisely, our approximate mathematical model $\overline{M}$
consists of the following two assumptions

\textbf{Assumptions:}

\textbf{1. }We assume that the number $\overline{s}>1$ is
sufficiently large and fixed. Also, the tail function $T\left(
\mathbf{x},\overline{s}\right) $ is given, and tail functions
$T^{\ast }\left( \mathbf{x},\overline{s}\right) ,T\left(
\mathbf{x},\overline{s}\right) $ have the forms
\begin{eqnarray}
T^{\ast }\left( \mathbf{x},\overline{s}\right)  &=&\frac{\ln
w_{0}\left(
\overline{s}\right) }{\overline{s}^{2}}+r^{\ast }\left( \mathbf{x},\overline{%
s}\right) ,\text{ }\mathbf{x}\in \Omega ,\text{ }r^{\ast }\in
C^{2}\left(
\overline{\Omega }\right) ,  \label{4.40} \\
T\left( \mathbf{x},\overline{s}\right)  &=&\frac{\ln w_{0}\left( \overline{s}%
\right) }{\overline{s}^{2}}+r\left( \mathbf{x},\overline{s}\right) ,\text{ }%
\mathbf{x}\in \Omega ,\text{ }r\in C^{2}\left( \overline{\Omega
}\right) ,
\label{4.41} \\
\left\Vert r^{\ast }\right\Vert _{C^{2}\left( \overline{\Omega
}\right) }
&\leq &\xi ,\text{ }\left\Vert r\right\Vert _{C^{2}\left( \overline{\Omega }%
\right) }\leq \xi ,  \label{4.42}
\end{eqnarray}%
where $\xi \in \left( 0,1\right) $ is a small parameter.
Furthermore, the parameter $\xi $ is independent on $\overline{s}.$

\textbf{2}. As the unknown vector $x\in B$ in Definition 4.1, we
choose the vector function $\overline{q}\left( \mathbf{x}\right)
=\left( q_{1},...,q_{N}\right) \left( x\right) $ in (\ref{3.8}),
(\ref{3.10}). In
this case we will have only one iteration in Definition 4.1, i.e. $N=%
\overline{N}=1$ in (\ref{4.1}).

It is for the sake of our convergence analysis that we choose here
the vector function $\overline{q}\left( \mathbf{x}\right) $ rather
than the coefficient $a\left( \mathbf{x}\right) $ as our unknown
function. Indeed,
finding $a\left( \mathbf{x}\right) $ in the weak form (\ref{5.1}), (\ref{5.2}%
) would require some interpolation estimates of the differences
between functions $p_{n}^{\ast }$ and their interpolations via the
finite dimensional subspace $G_{m}$ (see subsection 4.4 for
$p_{n}^{\ast },G_{m}$). The latter has an \textquotedblleft
underwater rock" in terms of the convergence analysis.

By Definition 4.1 we need to prove uniqueness of the function
$q^{\ast }\left( \mathbf{x},s\right) $ for $\mathbf{x}\in \Omega
,s\in \left[
\underline{s},\overline{s}\right] .$ Assuming that the exact tail function $%
T^{\ast }\left( \mathbf{x},\overline{s}\right) $\ satisfying (\ref{4.40}), (%
\ref{4.42}) is given and the interval $\left[ \underline{s},\overline{s}%
\right] $ is sufficiently small, this can be done easily, using the
fact that $q^{\ast }\left( \mathbf{x},s\right) $ is the solution of
an obvious analog of the problem (\ref{3.5}), (\ref{3.6}). This is
because Volterra-like integrals are involved in (\ref{3.5}).

\subsection{Approximate global convergence theorem}

Let $\gamma $ be the sum of the following errors: the level of the
error in the function $\psi (\mathbf{x},s)$ (by (\ref{3.7}) this
function is generated by the measurement data), errors in our finite
element approximations of $q_{n}$, the magnitude of the
$C^{2}(\overline{\Omega })$
norm of the tail function as well as the length of the interval $(\underline{%
s},\overline{s})$.

Consider a finite dimensional subspace $G_{m}\subset H_{0}^{1}\left(
\Omega \right) $ with $\dim G_{m}=m$. A realistic example of $G_{m}$
is the subspace, generated by triangular finite elements, which are
piecewise linear functions. We assume that $f_{x},f_{z}\in L_{\infty
}\left( \Omega \right) ,\forall f\in G_{m}.$ Since all norms in the
subspace $G_{m}$ are equivalent, then there exists a constant
$C_{m}=C_{m}\left( G_{m}\right) $ such that
\begin{equation}
\left\Vert \nabla f\right\Vert _{L_{\infty }\left( \Omega \right)
}\leq
C_{m}\left\Vert \nabla f\right\Vert _{L_{2}\left( \Omega \right) },\text{ }%
\forall f\in G_{m},\text{ }C_{m}\geq 2.  \label{4.11}
\end{equation}%
Here and below $\left\Vert \nabla f\right\Vert _{L_{\infty }\left(
\Omega \right) }:=\left( \left\Vert f_{x}\right\Vert _{L_{\infty
}\left( \Omega \right) }^{2}+\left\Vert f_{y}\right\Vert _{L_{\infty
}\left( \Omega \right) }^{2}\right) ^{1/2},$ and the same for the
$L_{2}\left( \Omega \right) $ norm. Theorem \ref{th:4.1} works with
the weak formulation (\ref{3.22}). In
the course of the proof of this theorem we need to estimate products $%
\triangledown q_{j}\triangledown p_{n}.$While this was done for
$C^{2+\alpha }\left( \overline{\Omega }\right) $ solutions in
\cite{bib18} using Schauder theorem, in the weak formulation we need
to assume that functions
\begin{equation}
q_{n}-\Psi _{n}:=p_{n}\in G_{m}.  \label{4.12}
\end{equation}%
We introduce functions $q_{n}^{\ast },\Psi _{n}^{\ast },$ which are
analogs of functions $q_{n},\Psi _{n}$ for the case $a:=a^{\ast }.$
Let $p_{n}^{\ast }:=$ $q_{n}^{\ast }-\Psi _{n}^{\ast }.$ We assume
that
\begin{equation}
q_{n}^{\ast },\Psi _{n}^{\ast }\in C^{2}\left( \overline{\Omega
}\right)
,\left\Vert \nabla q_{n}^{\ast }\right\Vert _{C\left( \overline{\Omega }%
\right) }\leq C^{\ast },n\in \left[ 1,N\right] ,  \label{4.13}
\end{equation}%
\begin{equation}
\max_{s\in \left[ s_{n},s_{n-1}\right] }\left( \left\Vert q^{\ast
}\left( \mathbf{x},s\right) -q_{n}^{\ast }\left( \mathbf{x}\right)
\right\Vert _{C^{2}\left( \overline{\Omega }\right) }\right) \leq
C^{\ast }h,n\in \left[ 1,N\right] ,  \label{4.14}
\end{equation}%
\begin{equation}
A_{1,n}\left( \triangledown q_{n}^{\ast }\right) ^{2}:=0,
\label{4.15}
\end{equation}%
\begin{equation}
p_{n}^{\ast }\in G_{m},\left\Vert \nabla p_{n}^{\ast }\right\Vert
_{C\left( \overline{\Omega }\right) }\leq 2C^{\ast },n\in \left[
1,N\right] , \label{4.16}
\end{equation}%
\begin{equation}
\left\Vert \nabla \Psi _{n}\left( \mathbf{x}\right) -\nabla \Psi
_{n}^{\ast }\left( \mathbf{x}\right) \right\Vert _{L_{2}\left(
\Omega \right) }\leq C^{\ast }\left( \sigma +h\right) ,n\in \left[
1,N\right] ,  \label{4.17}
\end{equation}%
where the number $C^{\ast }>1$ is given and $\sigma >0$ is a small
parameter
characterizing the level of the error in the data $\psi \left( \mathbf{x}%
,s\right)$. The connection between $\sigma $ and the error in $\psi
\left(\mathbf{x},s\right)$ is clear from
comparison of (\ref{3.9}), (\ref{3.11}) with (\ref{4.17}). Condition (\ref%
{4.15}) is a direct analog of (\ref{3.103}). Each function
$p_{n}^{\ast }$ is the weak solution of an analog of the problem
(\ref{3.22}).

\begin{theorem}
\label{th:4.1}\emph{Let }$\Omega \subset \mathbb{R}^{2}$\emph{\ be a
convex bounded domain with the boundary }$\partial \Omega \in
C^{3}$\emph{. Assume that Assumptions 1,2 of subsection 4.3 hold. In
addition, assume that conditions (\ref{3.101}), (\ref{3.103}),
(\ref{4.12})-(\ref{4.17}) hold. Let
}$\beta =\overline{s}-\underline{s}$. \emph{Consider the error parameter }$%
\gamma =\beta +h+\sigma +\xi ,$ where $\xi $ is the small parameter
defined in (\ref{4.40})-(\ref{4.42}), which is independent on
$\overline{s}$
(Assumption 1). \emph{Let }$C_{m}\geq 1$\emph{\ be the constant defined in (%
\ref{4.11}). Then }there exists a constant $B=B\left( \Omega \right) >0$%
\emph{\ such that if }
\begin{equation}
\gamma \leq \frac{B}{\overline{s}^{2}C^{\ast }},  \label{4.18}
\end{equation}%
then\emph{\ for each }$n\in \left[ 1,N\right] $\emph{\ there exists
unique solution }$p_{n}\in G_{m}$\emph{\ of the problem (\ref{3.22})
and for
functions }$q_{n}=p_{n}+\Psi _{n}$ \emph{the following estimate holds }%
\begin{equation}
\left\Vert \triangledown q_{n}-\triangledown q_{n}^{\ast
}\right\Vert _{L_{2}\left( \Omega \right) }\leq \theta ,
\label{4.19}
\end{equation}%
where $\theta =\left( BC^{\ast }\overline{s}^{2}\right) \gamma \in
\left( 0,1\right) .$ Thus, \emph{(\ref{4.19}) implies that }our
numerical method has the approximate global convergence property of
the level $\theta $ (Definition 4.1).
\end{theorem}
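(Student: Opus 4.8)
The plan is to prove the theorem by induction on $n\in[1,N]$, treating each step as a fixed-point problem solved via the Lax-Milgram theorem (as the authors announce in the Introduction, replacing the Schauder argument of \cite{bib18}). At step $n$, the previously computed functions $q_1,\dots,q_{n-1}$ are known, so the quantity $V_{n-1}:=h\sum_{j=0}^{n-1}\nabla q_j-\nabla T$ appearing in (\ref{3.22}) is a fixed vector field; the bilinear form is then
\begin{equation*}
a_n(p,\eta)=\int_\Omega \nabla p\nabla\eta\, d\mathbf{x}-A_{2,n}\int_\Omega V_{n-1}\cdot\nabla p\,\eta\, d\mathbf{x},
\end{equation*}
and I would first check its boundedness on $H_0^1(\Omega)\times H_0^1(\Omega)$ and its coercivity. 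Coercivity is where the smallness hypothesis (\ref{4.18}) enters: using $|A_{2,n}|\leq 8\overline{s}^2$ together with the bound (\ref{4.11}) on $\|V_{n-1}\|_{L_\infty}$ (which is finite because every $q_j-\Psi_j\in G_m$, so $V_{n-1}$ is essentially bounded and controlled by $C_m$ times an $L_2$ norm), plus the Poincar\'e inequality and an absorbing inequality $|V_{n-1}\cdot\nabla p\,\eta|\leq \tfrac{1}{2}|\nabla p|^2+\dots$, one gets $a_n(p,p)\geq \tfrac12\|\nabla p\|_{L_2}^2$ provided the product $\overline{s}^2 C^\ast\cdot(\text{size of }V_{n-1})$ is small enough --- which is exactly what (\ref{4.18}) buys us once the inductive hypothesis controls $\|\nabla q_j\|_{L_2}$. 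The right-hand side of (\ref{3.22}) is a bounded linear functional on $H_0^1(\Omega)$ for the same reasons (note $f_n$ involves $\Delta T=\Delta r$, which is in $C(\overline{\Omega})$ by (\ref{4.41}), and $\nabla\Psi_n,\nabla q_j\in L_2$). Lax-Milgram then yields a unique $p_n\in G_m$ (one restricts the bilinear form to the finite-dimensional subspace $G_m$, where Lax-Milgram again applies).

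Next I would derive the error estimate (\ref{4.19}). Subtract the weak identity (\ref{3.22}) for $p_n$ from the analogous identity satisfied by $p_n^\ast$ (stated in the last line before the theorem), and test with $\eta=p_n-p_n^\ast\in G_m$. The difference equation has the schematic form
\begin{equation*}
a_n(p_n-p_n^\ast,\eta)=(\text{terms involving }A_{k,n}-A_{k,n}^\ast,\ \nabla q_j-\nabla q_j^\ast,\ \nabla\Psi_n-\nabla\Psi_n^\ast,\ T-T^\ast,\ \text{and the }\Psi\text{-data mismatch}).
\end{equation*}
Each source term on the right is bounded: the tail difference $r-r^\ast$ is $O(\xi)$ in $C^2$ by (\ref{4.42}); the coefficient differences $A_{k,n}-A_{k,n}^\ast$ are $O(h)$ plus $O(\sum_{j<n}\|\nabla q_j-\nabla q_j^\ast\|_{L_2})$ (they depend on $q_j$ through the running sums, and on $s$ through a step of length $h$); the $\Psi$-mismatch is $O(\sigma+h)$ by (\ref{4.17}); and the discretization-in-$s$ error of $q^\ast$ versus $q_n^\ast$ is $O(h)$ by (\ref{4.14}), while $\beta=\overline{s}-\underline{s}$ enters through the length of the Volterra integrals. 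Using coercivity on the left and the inductive bound $\|\nabla q_j-\nabla q_j^\ast\|_{L_2}\leq\theta$ for $j<n$ on the right, together with $\|\nabla q_n-\nabla q_n^\ast\|_{L_2}\leq \|\nabla p_n-\nabla p_n^\ast\|_{L_2}+\|\nabla\Psi_n-\nabla\Psi_n^\ast\|_{L_2}$, one arrives at $\|\nabla q_n-\nabla q_n^\ast\|_{L_2}\leq (BC^\ast\overline{s}^2)(\beta+h+\sigma+\xi)=\theta$, closing the induction; the requirement $\theta\in(0,1)$ is then just (\ref{4.18}).

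The main obstacle, and the place where care is needed, is the handling of the nonlinear terms $(\nabla q_n)^2$-type contributions --- in particular the quadratic tail term $A_{4,n}\big(h\sum_{j<n}\nabla q_j-\nabla T\big)^2$ in $f_n$ and the products $\nabla q_j\nabla p_n$ --- within a weak $H^1$ framework where one cannot simply invoke $C^{2+\alpha}$ bounds as in \cite{bib18}. This is precisely why the hypotheses (\ref{4.12}), (\ref{4.16}) force $p_n,p_n^\ast\in G_m$: on the finite-dimensional space the equivalence of norms (\ref{4.11}) converts $L_2$ gradient control into $L_\infty$ gradient control (at the cost of the constant $C_m$), so all products of gradients can be estimated and the quadratic terms linearized around $q_j^\ast$ with remainder controlled by $\theta^2\leq\theta$. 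I would also have to verify that the constant $B=B(\Omega)$ can be chosen uniformly in $n\in[1,N]$ --- tracking that the running sums $h\sum_{j<n}$ contribute a factor bounded by $\overline{s}-\underline{s}\leq\overline{s}$ rather than growing with $N$ --- so that the single smallness condition (\ref{4.18}) suffices for every step simultaneously, which is what delivers the approximate global convergence ($N=\overline{N}=1$ in the sense of Assumption 2 and Definition \ref{de:4.1}).
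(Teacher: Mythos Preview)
Your proposal matches the approach the paper indicates: the authors omit the proof entirely (Remarks~4.1), stating only that it is ``similar with the proof of a similar theorem in \cite{bib18} for the $C^{\alpha}$ space'' with the Lax--Milgram theorem replacing the Schauder theorem and the finite-dimensional space $G_m$ handled via the norm equivalence (\ref{4.11}) --- precisely the strategy you outline. One small correction: the coefficients $A_{k,n}$ depend only on the $s$-partition (see the sentence after (\ref{3.10})), so $A_{k,n}=A_{k,n}^{\ast}$ and there is no ``coefficient difference'' term to estimate; this only simplifies your right-hand side.
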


\textbf{Remarks 4.1.}

1. We omit the proof of this theorem, because it is similar with the
proof of a similar theorem in \cite{bib18} for the $C^{\alpha }$
space. The only technical challenge in theorem \ref{th:4.1} is to
obtain analogous arguments in the finite dimensional space $G_{m}.$

2. The smallness of the parameter $\beta $ in (\ref{4.18}) is a
natural requirement since the original equation (\ref{3.5}) contains
Volterra-like integrals in nonlinear terms. It is well known from
the standard ODE course that the existence of a Volterra-like
nonlinear integral equation of the second kind can be proven only on
a small interval.

\section{Numerical Implementation}

Since this implementation was described in detail in \cite{bib18},
we outline it only briefly here for the convenience of the reader.
As to the functions $q_{n},$ we have sequentially calculated them
via the FEM solving Dirichlet boundary value problems (\ref{3.9}),
(\ref{3.10}). As it was mentioned in the end of subsection 3.2, we
have used standard triangular finite elements. Two important
questions which are discussed in this section are about
approximating the function $a\left( \mathbf{x}\right) $ and the tail
function $T\left( \mathbf{x}\right) $.

\subsection{Approximation of the function $a\left( \mathbf{x}\right) $}

We reconstruct the target coefficient $a\left( \mathbf{x}\right) $
via
backwards calculations as follows. First, we reconstruct the function $%
w\left( x,s_{N}\right) $ from $q_n,n=1,2,\ldots,N$ and tail $T$. We have $%
u\left( \mathbf{x},s_{N}\right) =\exp \left[ s^2w\left( \mathbf{x}%
,s_{N}\right) \right] $. Next, since in (\ref{2.1}) the source $\mathbf{x}%
_{0}\notin \overline{\Omega },$ we use equation (\ref{2.1}) in the
weak form as
\begin{equation}
-\int\limits_{\Omega }\nabla u\triangledown \eta_{p}d\mathbf{x}%
=\int\limits_{\Omega }au\eta_{p}d\mathbf{x},  \label{5.1}
\end{equation}
where the test function $\eta_{p}\left( \mathbf{x}\right) ,p\in \left[ 1,P%
\right] $ is a quadratic finite element of a computational mesh with
the
boundary condition $\eta_{p}\left( \mathbf{x}\right) |_{\partial \Omega }=0$%
. The number $P$ is finite and depends on the mesh we choose. Equalities (%
\ref{5.1}) lead to a linear algebraic system which we solve. Since
this formulation is complex but standard, it is omitted here.
Interested readers can see \cite{bib13}. Finally, we let
\begin{equation}
a\left( \mathbf{x}\right) = \max \left( \overline{a}\left( \mathbf{x}%
\right), k^{2}\right) .  \label{5.2}
\end{equation}

\subsection{Construction of the tail function}

The above construction of functions $q_{n}$ depends on the tail
function $T.$ In this subsection we state briefly our procedure of
approximating the tail function, see \cite{bib18,bib22,bib23} for
details. This procedure consists of two stages. First, we find a
first guess for the tail using the asymptotic behavior of the
solution of the problem (\ref{2.1}), (\ref{2.2})
as $\left\vert \mathbf{x}_{0}\right\vert \rightarrow \infty $ (Lemma \ref%
{Le:2.2}), as well as boundary measurements. On the second stage we
refine the tail.

\begin{figure}[th]
\begin{center}
\begin{minipage}{130mm}
\subfigure[]{ \resizebox*{6cm}{!}{\label{F1a}\includegraphics{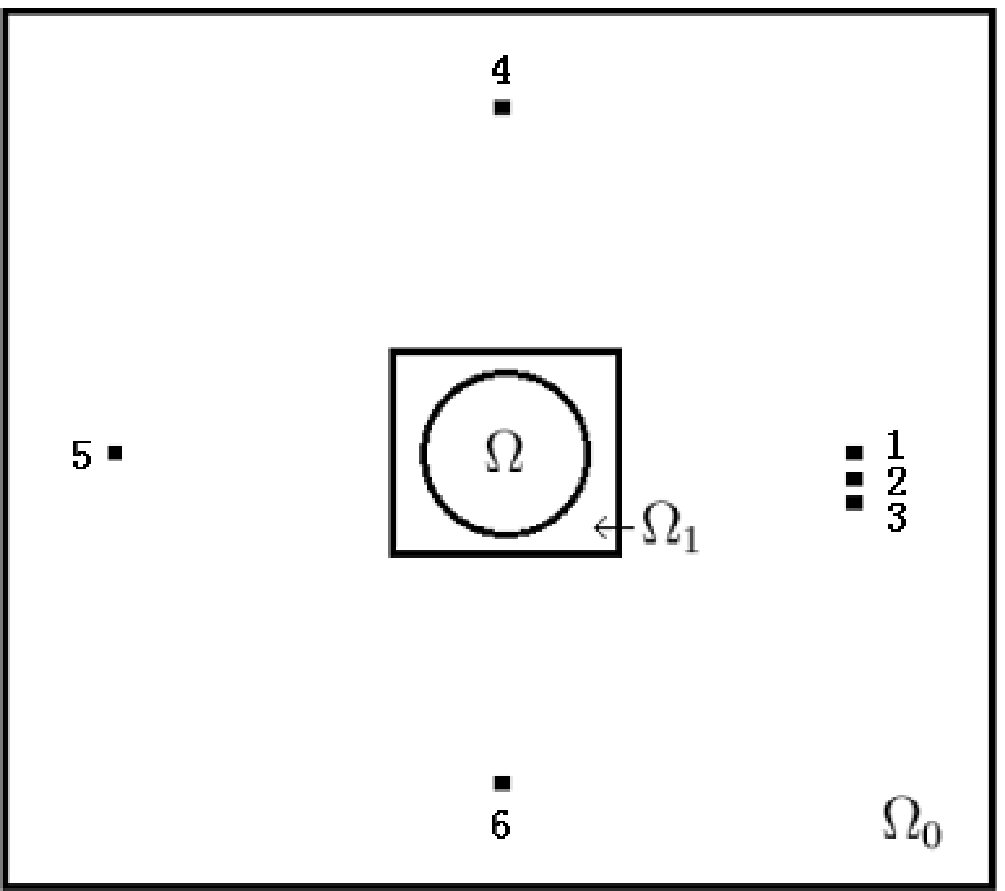}}} %
\subfigure[]{ \resizebox*{6cm}{!}{\label{F1b}\includegraphics{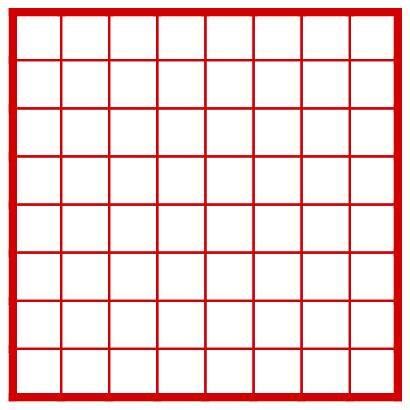}}} %
\caption{\emph{The schematic diagrams of inverse problem domain and
light source locations. (a) Illustrates a layout of the inverse
problem setting in 2-D. The circular disk $\Omega $ corresponds to a
horizontal cross-section of the hemisphere part of phantom (the
supposed \textquotedblleft mouse head" in animal experiments), The
computational domain $\Omega _{1}$ is a rectangle containing $\Omega
$ inside, 6 light sources are located outside of the computational
domain. Because of limitations of our device, we use only three
locations of the light source along one line (number 1,2,3) to model
the source }$\mathbf{x}_{0}$\emph{\ running along the straight line
}$L $. \emph{Light sources numbered 1,4,5,and 6 are used to
construct an
approximation for the tail function. (b) Depicts the computational domain }$%
\Omega _{1}=\{(x,z),|x|<5.83,|z|<5.83\}$ \emph{(unit:mm) and its
rectangular meshes for tail functions and inverse calculations for
the numerical method of this paper. Actual mesh is much dense than
these displayed. The diagram is not scaled to actual sizes.}}
\label{F1}
\end{minipage}
\end{center}
\end{figure}

We display in Fig. \ref{F1} the boundary data collection scheme in
our
experiment. We have used six locations of the light sources, see Fig. \ref%
{F1a}. Sources number 1,2 and 3 are the ones which model the source $\mathbf{%
x}_{0}$ running along the straight line $L,$ see (\ref{2.4}). The
distance between these sources was six (6) millimeters. Each light
source means drilling a small hole in the phantom to fix the source
position. It was impossible to place more sources in a phantom of
this size that mimics actual mouse head. So, since the data for the
functions $q_{n}$ are obtained via the differentiation with respect
to the source position, we have used only two functions
$q_{1},q_{2}$. Fortunately, the latter was
sufficient for our goal of imaging of abnormalities. Sources 1,4,5,6 (Fig. %
\ref{F1a}) were used to approximate the tail function as described
below. We
construct that approximation in the domain $\Omega _{1}$ depicted on Fig. %
\ref{F1b}. This domain is (units in millimeters)
\begin{equation}
\Omega _{1}:=\left\{ \mathbf{x}=\left( x,z\right)
:x_{1}=-5.83<x<x_{2}=5.83,z_{1}=-5.83<z<z_{2}=5.83\right\} .
\label{5.3}
\end{equation}

\subsubsection{The first stage of approximating the tail}

Since this stage essentially relies on positions of sources 1,4,5,6,
it makes sense to describe this stage in detail here, although it
was also described in \cite{bib18}. The light sources are placed as
far as possible
from the domain $\Omega _{1}$, so that the asymptotic approximations (\ref%
{5.5}), (\ref{5.6}) of solutions are applicable. Let
$\mathbf{s}^{\left( j\right) }$ be the two-dimensional vector
characterizing the position of the light source number $j$. First,
consider the light source number 1. Denote
\begin{equation}
S^{\left( 1\right) }:=S^{\left( 1\right) }\left(
x,z,\mathbf{s}^{\left( 1\right) }\right) =\left\vert \left(
x,z\right) -\mathbf{s}^{\left( 1\right) }\right\vert .  \label{5.4}
\end{equation}%
Using (\ref{2.5}), Theorem \ref{th:2.1} and Lemma \ref{Le:2.2}, we
obtain
for the function $\widetilde{w}=\ln u$%
\begin{equation}
\widetilde{w}\left( x,z,\mathbf{s}^{\left( 1\right) }\right)
=-kS^{\left( 1\right) }-\ln \left( 2\sqrt{2\pi }\right)
-\frac{1}{2}\ln S^{\left( 1\right) }+p_{\infty }\left( x,z\right)
+O\left( \frac{1}{S^{\left( 1\right) }}\right) ,S^{\left( 1\right)
}\rightarrow \infty .  \label{5.5}
\end{equation}%
Since by (\ref{5.3}) $\widetilde{w}\left( x,z_{1},\mathbf{s}^{\left(
1\right) }\right) =\varphi \left( \mathbf{x,s}^{\left( 1\right) }\right) ,%
\mathbf{x}\in \partial \Omega \cap \left\{ z=z_{1}\right\} ,$ we use (\ref%
{5.5}) to approximate the unknown function $p_{\infty }\left(
x,z_{1}\right) $ as
\begin{equation}
p_{\infty }\left( x,z_{1}\right) =\widetilde{w}\left( x,z_{1},\mathbf{s}%
^{(1)}\right) +kS\left( x,z_{1},\mathbf{s}^{(1)}\right)
+\frac{1}{2}\ln \left( \frac{\pi }{2S\left(
x,z_{1},\mathbf{s}^{(1)}\right) }\right) . \label{5.6}
\end{equation}

Formula (\ref{5.6}) gives the value of $p_{\infty }\left(
x,z_{1}\right) $ only at $z=z_{1}$. Since $\Omega _{1}$ is a square,
we set the first guess for the tail as the one which is obtained
from (\ref{5.6}) by simply
extending the values at $z=z_{1}$ to the entire domain of $\Omega _{1},$%
\begin{equation*}
\widetilde{w}\left( x,z,\mathbf{s}^{\left( 1\right) }\right)
=-kS^{\left(
1\right) }\left( x,z,\mathbf{s}^{\left( 1\right) }\right) -\ln \left( 2\sqrt{%
2\pi }\right) -\frac{1}{2}\ln S^{\left( 1\right) }\left( x,z,\mathbf{s}%
^{\left( 1\right) }\right) +p_{\infty }\left( x,z_{1}\right) ,
\end{equation*}
where the function $S\left( x,z,\mathbf{s}^{\left( 1\right) }\right)
$ is
given in (\ref{5.4}). Next, we compute the function $u\left( x,z,\mathbf{s}%
^{\left( 1\right) }\right) =\exp \left( \widetilde{w}\left( x,z,\mathbf{s}%
^{\left( 1\right) }\right) \right) $ and get $a^{\left( 1\right)
}\left( x,z\right) ,\left( x,z\right) \in \Omega $ via (\ref{5.1}),
(\ref{5.2}).

For light sources 4-6, we repeat the above procedure to get
$a^{\left(
4\right) }\left( x,z\right) ,a^{\left( 5\right) }\left( x,z\right) $ and $%
a^{\left( 6\right) }\left( x,z\right) $ respectively. Then we
consider the average coefficient $\overline{a}(x,z)$ and set (see
(\ref{5.2}))
\begin{equation}
a\left( x,z\right) :=\max (\overline{a}(x,z),k^{2}).  \label{5.8}
\end{equation}%
Next, we solve the forward problem (\ref{2.1}), (\ref{2.2}) for the light $%
\mathbf{s}^{\left( 3\right) }$ with this coefficient $a\left(
x,z\right) $ again to get $u\left( x,z,\overline{s}\right)
,\overline{s}=\left\vert \mathbf{s}^{\left( 3\right) }\right\vert .$
The final approximate tail function obtained on the first stage is
\begin{equation}
T_{1}\left( x,z\right) =\frac{\ln u\left( x,z,\overline{s}\right) }{%
\overline{s}^{2}}.  \label{5.9}
\end{equation}

\subsubsection{The second stage for the tail}

The second stage involves an iterative process that enhances the
first approximation for the tail (\ref{5.9}). We describe this stage
only briefly here referring to \cite{bib18} for details. In this
case we use only one source number 3, $\mathbf{s}^{\left( 3\right)
}.$\ Recall that $\left\vert
\mathbf{s}^{\left( 3\right) }\right\vert =\overline{s}$. Let the function $%
a\left( x,z\right) $ be the one calculated in (\ref{5.8}). Denote $%
a_{1}\left( \mathbf{x}\right) :=a\left( x,z\right) .$ Next, we solve
the following boundary value problem
\begin{eqnarray*}
\Delta u_{1}-a_{1}\left( \mathbf{x}\right) u_{1} &=&0,\mathbf{x}\in
\Omega
_{1}, \\
u_{1}|_{\partial \Omega _{1}} &=&\varphi \left( \mathbf{x},\overline{s}%
\right) ,\mathbf{x}\in \partial \Omega _{1}.
\end{eqnarray*}%
Then, we iteratively solve the following boundary value problems
\begin{eqnarray*}
\Delta w_{m}-a_{m}\left( \mathbf{x}\right) w_{m} &=&\left[
a_{m}\left(
\mathbf{x}\right) -a_{m-1}\left( \mathbf{x}\right) \right] u_{m-1},\text{ }%
m\geq 2, \\
w_{m}|_{\partial \Omega _{1}} &=&0.
\end{eqnarray*}%
We set $u_{m}:=u_{m-1}+w_{m}.$ Next, using (\ref{5.1}) and (\ref{5.2}) with $%
u:=u_{m},$ we find the function $a_{m+1}\left( \mathbf{x}\right) .$
We have computationally observed that this iterative process
provides a convergent sequence $\left\{ a_{m}\left(
\mathbf{x}\right) \right\} $ in $L_{2}\left( \Omega _{1}\right) .$
We stop iterations at $m:=m_{1}$, where $m_{1}$ is defined via
\begin{equation}
\frac{\left\Vert a_{m_{1}}-a_{m_{1}-1}\right\Vert _{L_{2}\left(
\Omega _{1}\right) }}{\left\Vert a_{m_{1}-1}\right\Vert
_{L_{2}\left( \Omega _{1}\right) }}\leq \varepsilon ,  \label{5.10}
\end{equation}%
where $\varepsilon >0$ is a small number of our choice, and the norm in $%
L_{2}\left( \Omega _{1}\right) $ is understood in the discrete
sense. Next, assuming that $u_{m_{1}}>0$, we set for the tail
\begin{equation}
T\left( \mathbf{x}\right) =\frac{1}{\overline{s}^{2}}\ln
u_{m_{1}}\left( \mathbf{x}\right) .  \label{5.11}
\end{equation}%
Then, using (\ref{5.11}), we proceed with calculating of functions
$q_{n}$ as described above.

\section{Real Data}

We now describe our experimental setup for collecting the optical
tomography data from an optical phantom.\ This phantom is a man-made
subject that has the same optical property as small animals. Such a
phantom is a well-accepted standard to test reconstruction methods
for real applications before animal experiments.

\begin{figure}[th]
\centering
\includegraphics[width=0.4\textwidth]{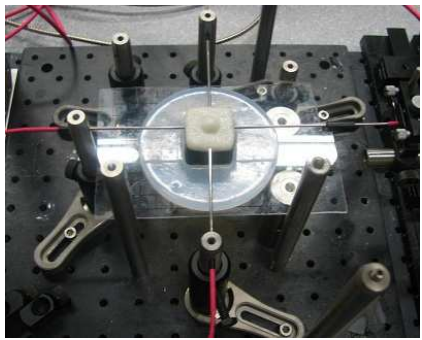}
\caption{\emph{A photograph of the experimental setup. An optical
phantom is connected with 4 laser fibers, the one on the right hand
side is movable to locations 1,2, and 3 (Fig. \ref{F1a}). The
gelatin made phantom has the shape of the rectangular box with a
hemisphere on its top surface. The hemisphere mimics the head of a
mouse with a mask exposing the crown part of the mouse. A CCD camera
mounted above the phantom (not shown) provides light intensity
measurements of the top surface of the phantom.}} \label{F2}
\end{figure}

Fig. \ref{F2} is a photograph of our measurement setup. The center
of the picture is the phantom (rectangular box with a hemisphere on
top surface) which, in particular, contains a hidden inclusion
inside (not visible in this photo). The hemisphere mimics the mouse
head in animal experiments of stroke studies, and hidden inclusions
of ink-mix mimic blood clots. The four needles are laser fibers that
provide light sources in our experiments. The fiber on the right
hand side of Fig. \ref{F2} can be moved to three other positions
which are source positions 1,2,3 on Fig. \ref{F1a}. Three other
fibers represent positions 4,5,6 of the source on Fig. \ref{F1a}. A
CCD camera is mounted directly above the setup (not shown in photo),
and the camera focus is on the top surface of the phantom. CCD
stands for Charge Coupled Device. CCD camera with its sensitivity
and response range is commonly used for near infrared laser imaging
of animals.

\begin{figure}[th]
\begin{center}
\begin{minipage}{130mm}
\subfigure[]{ \resizebox*{6cm}{!}{\label{F3a}\includegraphics{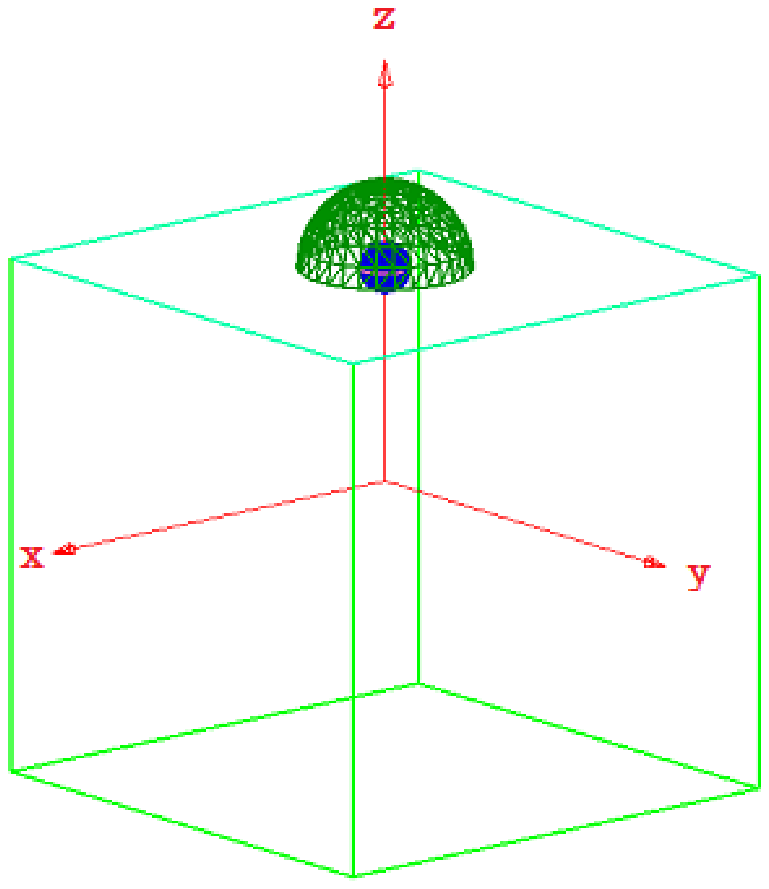}}} %
\subfigure[]{ \resizebox*{6cm}{!}{\label{F3b}\includegraphics{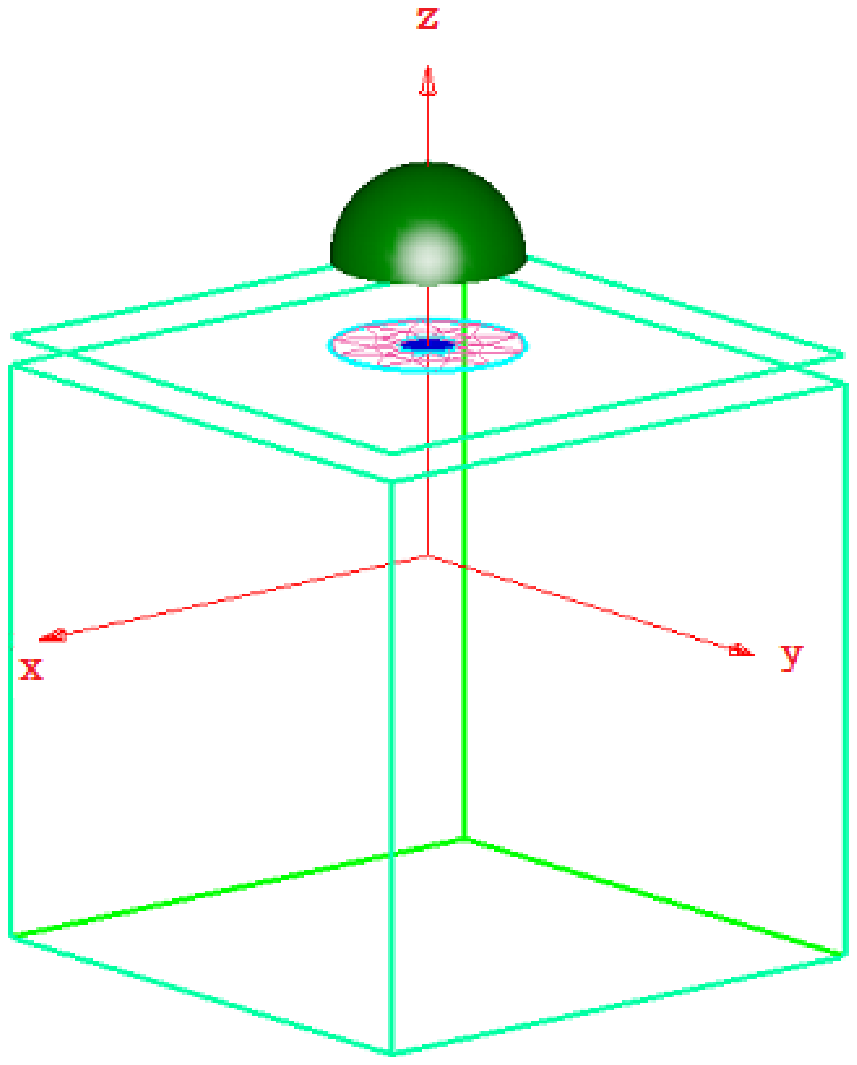}}} %
\caption{\emph{The schematic diagrams of our data acquisition
process. (a) schematically depicts a 3-D phantom and its hidden
inclusion (shown in meshed surface) 5mm below the phantom. The light
source is placed at various locations on the top surface of the
rectangular block (see Fig. \ref{F1a}) and light intensity
measurements are taken on the top surface of phantom. (b) The
measurement surface by a CCD camera. The data are collected from the
surface of the hemisphere as well as from the un-shaded area of top
rectangle, both area are lifted from the phantom in this drawing for
a better illustrative purpose. The rectangular figure in Fig.
\ref{F3b} also illustrates the middle a 2-D cross-section (meshed
circle) of the presumed \textquotedblleft animal head", at the
boundary of which light intensity data are collected for the
reconstruction. The light intensity at the 2-D cross-section (except
for the boundary) is obstructed by the top surface of the
hemisphere. Light sources are also located in the same plane as this
cross-section area. The 2-D inverse problem is solved in this
cross-section by ignoring the dependence on the orthogonal
coordinate. }} \label{F3}
\end{minipage}
\end{center}
\end{figure}

The purpose of this experimental setup is to study the feasibility
of using our numerical method in an animal stroke model. In the case
of a real animal, the top hemisphere (meshed shape in Fig. 6.2a) is
to be replaced by a mouse head and the rectangular block of phantom
to be replaced by an optical mask filled with a matching" fluid (a
tissue-like solution with the optical properties similar to the
animal skin/skull). The image reconstruction should provide the
spatial distribution of the optical coefficient $a(\mathbf{x})$
(directly related to the blood content) in a 2-D cross-section of
the animal brain.

The goal to is to accomplish a noninvasive imaging means to monitor
and investigate hemodynamic dysfunction during ischemic stroke in
animal models. The current experimental setup works only for small
animals, such as rats, but not for human brains due to the limited
penetration depth of NIR light through a larger size and thicker
bone structure of the skull. The methodology developed here,
however, may be applicable to a neonate head, while further studies
are needed to confirm such an expectation.

Our inverse reconstruction is performed in the 2-D plane depicted in Fig. %
\ref{F3b}, with the optical parameter distribution of the medium
inside the circle (at the center of Fig. \ref{F3b}) as an unknown
coefficient in the photon diffusion model. Should we need a
diagnosis of a different cross-section of mouse brain in animal
experiments, we need a different optical mask filled with matching
fluid, and repeat both the experiment and the reconstruction.

\begin{figure}[th]
\centering
\includegraphics[width=0.4\textwidth]{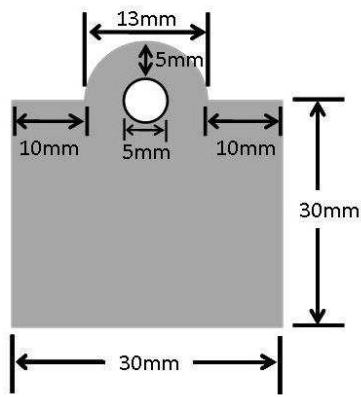}
\caption{\emph{Dimensions of the phantom, shown in a vertical
cross-section at the center of the phantom from side view. The
circle in Fig. \ref{F1a} corresponds to the boundary of the meshed
circle in Fig. \ref{F3b}. This also corresponds to the boundary of
the 13 mm diameter circle at a top view above the rectangular box
(but not related to circle or semicircle in this graph). The 5 mm
diameter hollow sphere is for the placement of inclusions, filled
with liquids of different compositions to emulate strokes. The
hidden inclusion is 5 mm below the top surface.}} \label{F4}
\end{figure}

The geometry of the phantom (shown in a vertical central
cross-section) is depicted in Fig. \ref{F4} with dimensions
specified. The phantom is shaped by a hemisphere (diameter 13 mm) on
the top of a cube of 30mm $\times $ 30mm $\times $ 30mm. A spherical
hollow of 5 mm diameter is located inside the phantom, with its top
5mm below the top surface (Fig. \ref{F4}). The location of this
hollow is symmetric here but can be in any place when needed. One
and two spherical hollows of 3mm diameter inside the phantom are
also used in experiments (see Section 8 for detail). We fill the
hollow with liquids made of different kinds of ink/intralipid mix to
model strokes by blood clots. Intralipid is a product for fat
emulsion, which mimics the response of human or animal tissue to
light at wavelengths in the red and infrared ranges. The phantom is
made of gelatin mixed with the intralipid. The percentage of
intralipid content is adjustable. So that the phantom has the same
optical parameters as the background medium of the target animal
model. At the location of the inclusion (the hollow), we inject
ink/intralipid mixed fluids whose optical absorption coefficients
are 2 times, 3 times and 4 times higher than in the background.
Also, we use the pure black ink to test our reconstruction method
for the case of the infinite absorption. Different levels of the
inclusion/background absorption ratio are used to validate our
method for its ability for different blood clots. Light intensity
measurements are taken directly above the semicircle in Fig.
\ref{F4}. We note that Fig. \ref{F1a} shows a top view of the layout
but Fig. \ref{F4} shows a cross-section from side view. The disk region $%
\Omega $ in Fig. \ref{F1a} corresponds to the meshed area in Fig.
\ref{F3b} as well as a horizontal cross-section of the 13 mm
hemisphere. The reason we use such a geometry is that we need to
test our capability to reconstruct inclusions at several different
depth (moving up and down vertically as in Fig. \ref{F4}) as well as
at different locations (moving horizontally in Fig. \ref{F3b}).

\section{Processing Real Data}

As it is typical when working with imaging from real data, we need
to make several steps of data pre-processing before applying our
inverse algorithm.

\subsection{Computational domains}

We need to use several computational domains. The computational
domains and meshes used in our numerical calculation involve four
domains. Fig. \ref{F5a} - Fig. \ref{F5c} and Fig. \ref{F1b} are the
finite element meshes on each of these domains $\Omega ,\Omega
_{0},\Omega _{0}\diagdown \Omega ,\Omega _{1}$ respectively. In
actual computations we use more refined meshes for each domain than
those illustrated. Therefore, figures are not scaled to actual
sizes.

\begin{figure}[th]
\begin{center}
\begin{minipage}{130mm}
\subfigure[$\Omega$]{     \resizebox*{4cm}{!}{\label{F5a}\includegraphics{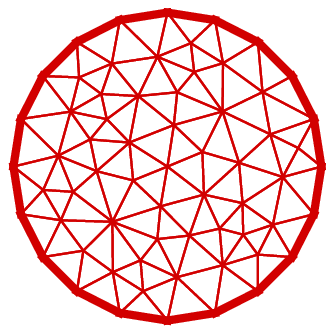}}} %
\subfigure[$\Omega_{0}$]{ \resizebox*{4cm}{!}{\label{F5b}\includegraphics{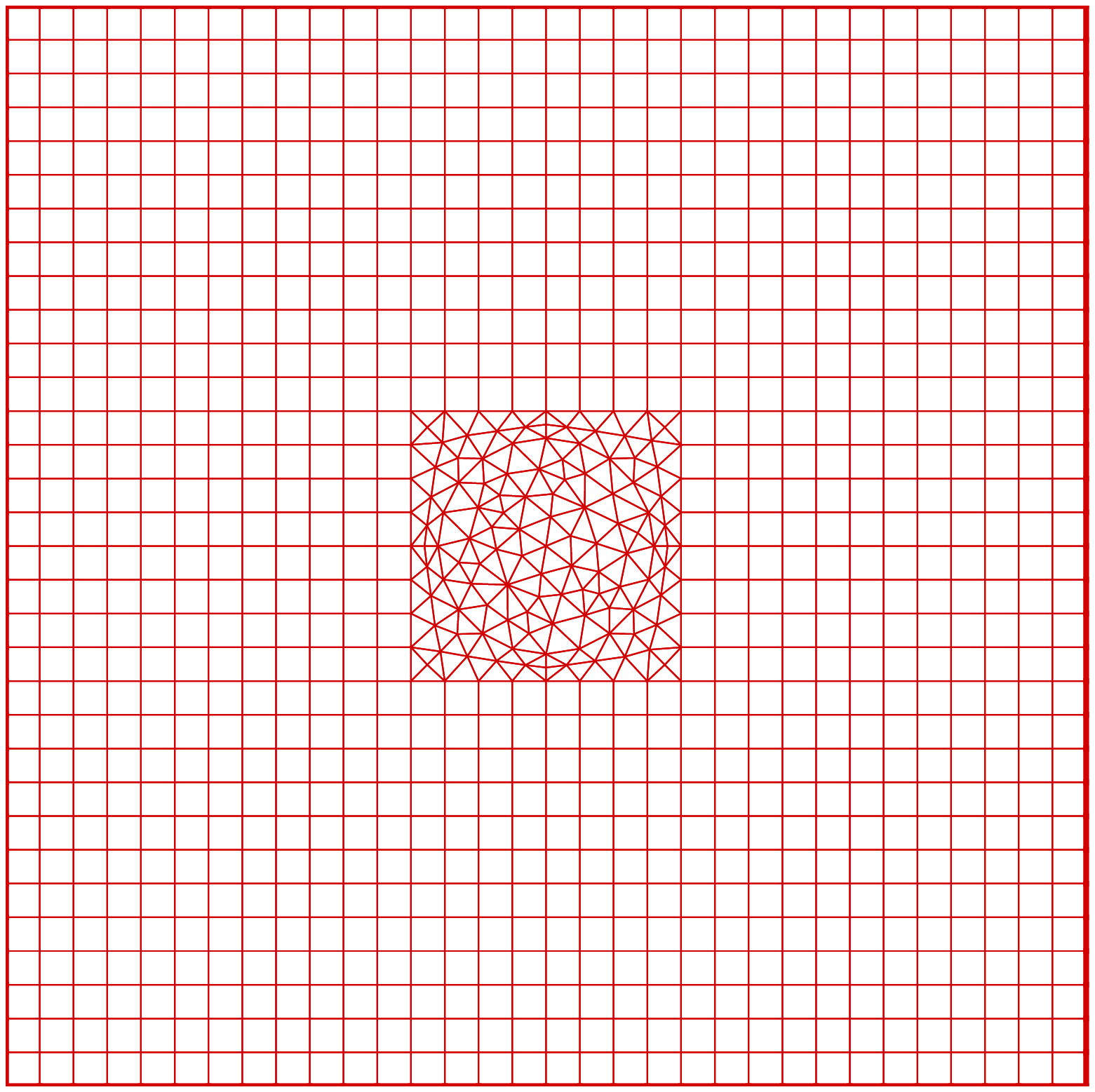}}} %
\subfigure[$\Omega_{0}\backslash\Omega $]{ \resizebox*{4cm}{!}{\label{F5c}\includegraphics{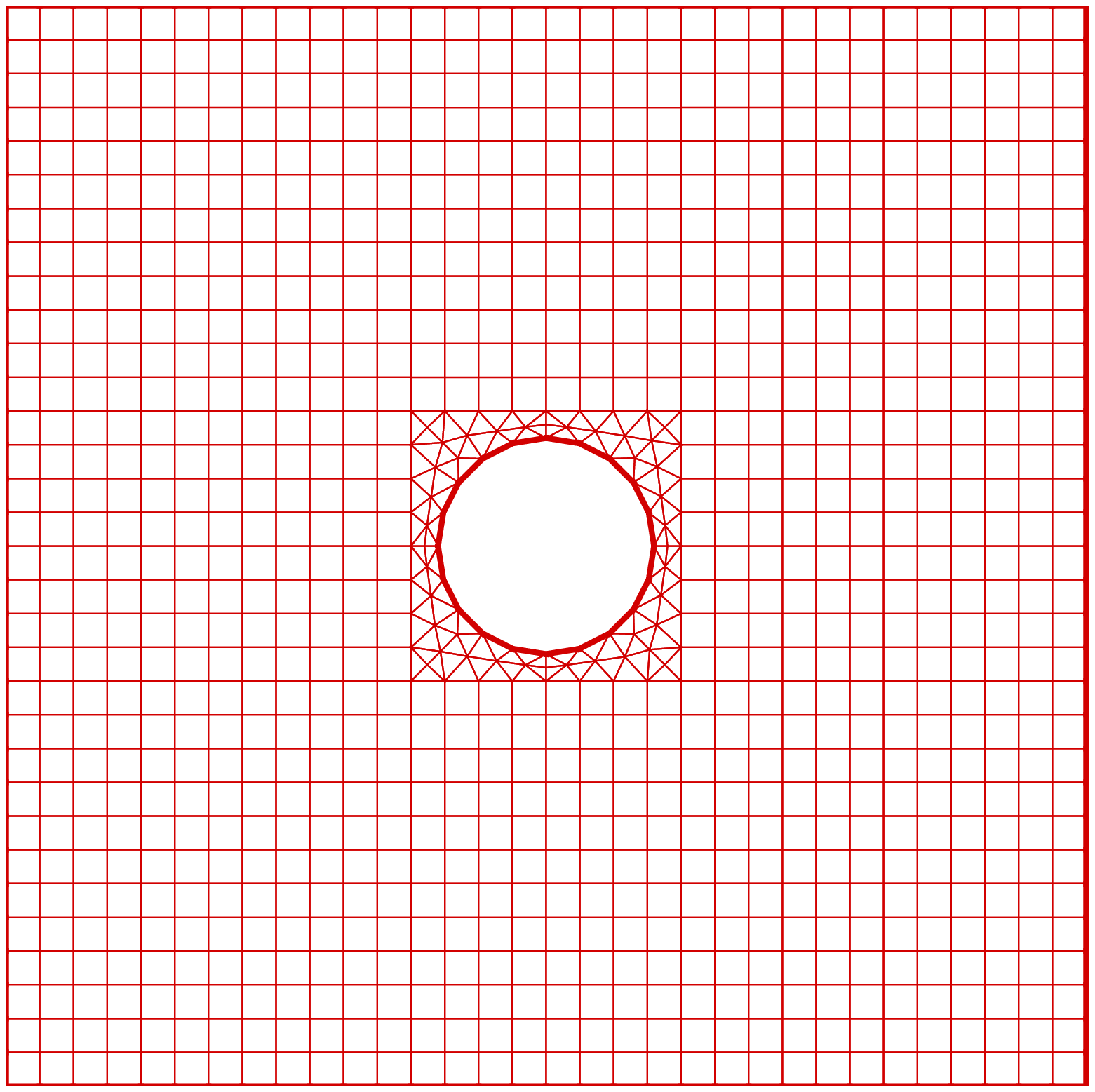}}} %
\caption{\emph{(a) The domain $\Omega
=\{(x,z),\protect\sqrt{x^{2}+z^{2}}<4.63\}$ (units: mm) with a
triangular mesh. The domain represents a cross-section of our
phantom, and the inclusion inside $\Omega $ is to be reconstructed.
(b) The domain $\Omega _{0}=\{(x,z),|x|<23.32,|z|<23.32\}$ (units:
mm). The domain $\Omega_{0}$ contains $\Omega$ and light sources.
Meshed small square is the domain $\Omega _{1}$ which is the same as
on Fig. \ref{F1b}. Computations of the inverse problem are performed in $%
\Omega _{1}.$ (c) Shows $\Omega _{0}\diagdown \Omega $ the domain
used for
data processing. To smooth out the measurement noise, equation (\protect\ref%
{2.1}) with the Dirichlet boundary conditions $u\mid _{\partial \Omega }=%
\protect\varphi \left( x,x_{0}\right) $, $u\mid _{\partial \Omega
_{0}}=0$
is solved in $\Omega _{0}\diagdown \Omega $. Then the smoothed data along $%
\partial \Omega _{1}$ is used for the inverse problem.}}
\label{F5}
\end{minipage}
\end{center}
\end{figure}

Here the disk $\Omega $ is the domain of interest that contains the
inclusion, corresponding to the meshed area in Fig. \ref{F3b}. It
reflects a cross-section of the hemisphere of phantom in Fig.
\ref{F3b}. The light intensity data used in our computations are
originated by the measurements at $\partial \Omega $ by taking
pictures with CCD camera on the top of phantom as discussed, shown
in Fig. \ref{F3b}. The CCD camera collected data is acquired from
the surface of the hemisphere and the un-shaded area of the top
rectangle, and we extract only the needed data for $\partial \Omega
$ i.e., the circle in Fig. \ref{F3b} by getting the data at these
locations as boundary values. Here $\Omega =\left\{ \left(
x,z\right) |\sqrt{x^{2}+z^{2}}\leq 4.63mm\right\} $.

On Fig. \ref{F5} $\Omega _{0}$ is a large domain, which can be
interpreted as a truncated plane $\mathbb{R}^{2}.$ Indeed, one
cannot practically solve
the forward problem (\ref{2.1}), (\ref{2.2}) in the infinite plane $\mathbb{R%
}^{2}.$ Light sources are located in $\Omega _{0}\diagdown \Omega $.
The background simulation and calibration of background parameters
are performed in $\Omega _{0}.$ To smooth out the noise in the data,
equation (\ref{2.1}) is solved in $\Omega _{0}\diagdown \Omega $ for
each source position, which is similar with \cite{bib22}. Because of
(\ref{2.3}), we use $a\left( \mathbf{x}\right) :=k^{2}$ for $x\in
\Omega _{0}\diagdown \Omega .$ In this procedure the Dirichlet
boundary condition at $\partial \Omega $ is taken
from the real data, and we use the zero Dirichlet boundary condition at $%
\partial \Omega _{0}.$ As a result, we obtain the smoothed Dirichlet
boundary condition at $\partial \Omega _{1}$ for each source
location. We solve the inverse problem in $\Omega _{1},$ and $\Omega
\subset \Omega _{1}$.

\subsection{Data pre-processing and approximations of boundary conditions}

As it was stated in subsection 7.1, to smooth out our measurement
data, we solve the Dirichlet boundary problem for equation
(\ref{2.1}) with $a\left( \mathbf{x}\right) \equiv k^{2}$ for
$\mathbf{x}\in \Omega _{0}\diagdown \Omega $ (Fig. \ref{F5c}) for
each source location. Namely,
\begin{eqnarray}
\Delta u-k^{2}u &=&-\delta \left( \mathbf{x}-\mathbf{s}^{\left(
i\right)
}\right) ,\mathbf{x}\in \Omega _{0}\diagdown \Omega ,  \label{7.1} \\
u\mid _{\partial \Omega } &=&\varphi \left(
\mathbf{x},\mathbf{s}^{\left( i\right) }\right) ,\text{ }u\mid
_{\partial \Omega _{0}}=0.  \notag
\end{eqnarray}%
Here $k^{2}=3\mu _{s}^{\prime }\mu _{a}$ is the background value and $%
\varphi \left( \mathbf{x,}\mathbf{s}^{\left( i\right) }\right) $ is
the experimentally measured data for the light source
$\mathbf{s}^{\left( i\right) }$. Let $\overline{\varphi }\left(
\mathbf{x},\mathbf{s}^{\left( i\right) }\right) $ be the trace of
the solution of the Dirichlet boundary value problem (\ref{7.1}) at
$\mathbf{x}\in \partial \Omega _{1}$. We solve
the inverse problem in the square $\Omega _{1}$ with the smoothed data $%
u\mid _{\partial \Omega _{1}}=\overline{\varphi }\left( \mathbf{x},\mathbf{s}%
^{\left( i\right) }\right) .$ We have shown in \cite{bib23,bib25}
that solving inverse problems in the original domain $\Omega $ and
extended domain $\Omega _{1}$ are equivalent mathematically. But
numerically the noisy component in $\overline{\varphi }\left(
\mathbf{x},\mathbf{s}^{\left(
i\right) }\right) $ is much smaller than in $\varphi \left( \mathbf{x},%
\mathbf{s}^{\left( i\right) }\right) .$ This smoothing effect takes
place because the inverse problem is solved in $\Omega _{1}$.

\subsection{The forward problem and calibration}

We now address the question on which value of $k^{2}$ one should use
when working with the real data. The optical properties of the
background of the phantom (without the hidden inclusion) are known
theoretically from the concentration of intralipid in the mix.
However, there is a discrepancy between the theoretical value and
actual measurements. Before we solve the inverse problem to image
hidden inclusions, we calibrate our model by adjusting the
background value of $k^{2}$ to the real data measured for the
reference medium, which is the phantom without inclusion, i.e. the
hollow is filled with the same intralipid solution as that in the
phantom itself.

First, we numerically solve the forward problem with the source position $%
\mathbf{s}^{\left( 1\right) }$ in the domain $\Omega _{0}$ without
any inclusion,
\begin{eqnarray}
\Delta u-k^{2}u &=&-A\delta \left( \mathbf{x}-\mathbf{s}^{\left(
1\right)
}\right) ,\mathbf{x}\in \Omega _{0},  \label{7.2} \\
u\mid _{\partial \Omega _{0}} &=&0,  \notag
\end{eqnarray}%
where $k^{2}=3\mu _{s}^{\prime }\mu _{a}$ is the background value.
Then we calibrate the parameter $\mu _{a}$ (fixing $\mu _{s}^{\prime
}$) as well as the amplitude $A>0$ of light source in our model
(\ref{7.2}) to match the measured light intensity for
$\mathbf{s}^{\left( 1\right) }$ for the uniform background. We do
not know the number $A$. Thus, we choose $A$ in such a way that
$u_{comp}\left( \mathbf{x}_{\max },\mathbf{s}^{\left( 1\right)
}\right) \approx u_{meas}\left( \mathbf{x}_{\max
},\mathbf{s}^{\left( 1\right) }\right) .$ Here $\mathbf{x}_{\max }$
is the brightest point, i.e. the far
right point on $\partial \Omega $, being closest to the light source. Also, $%
u_{comp}\left( \mathbf{x}_{\max },\mathbf{s}^{\left( 1\right)
}\right) $ and $u_{meas}\left( \mathbf{x}_{\max },\mathbf{s}^{\left(
1\right) }\right) $ are computed and measured light intensities
respectively. Next, we should approximate the constant $k^{2}.$ To
do this, we take another sampling point $\mathbf{x}_{\min }$ with
the minimal light intensity, which is the farthest left point on
$\partial \Omega $. Next we consider ratios $R_{comp}\left(
k^{2}\right) ,R_{meas},$ where
\begin{equation*}
R_{comp}\left( k^{2}\right) =\frac{u_{comp}\left( \mathbf{x}_{\max },\mathbf{%
s}^{\left( 1\right) }\right) }{u_{comp}\left( \mathbf{x}_{\min },\mathbf{s}%
^{\left( 1\right) }\right) },\text{ }R_{meas}=\frac{u_{meas}\left( \mathbf{x}%
_{\max },\mathbf{s}^{\left( 1\right) }\right) }{u_{meas}\left( \mathbf{x}%
_{\min },\mathbf{s}^{\left( 1\right) }\right) }.
\end{equation*}%
These ratios are independent on the number $A$ in (\ref{7.2}). We choose $%
k^{2}$ such that $R_{comp}\left( k^{2}\right) \approx R_{meas}.$ As
a result, the calibrated value of $k^{2}$ was $k^{2}=2.403.$ This
computed value matches quite well the theoretical value of 2.4 of
the intralipid solution we have used.

\section{ Results Of the Reconstruction}

Let $a_{incl}=a\left( \mathbf{x}\right) $ be the value of $a\left( \mathbf{x}%
\right) $ inside the inclusion and $a_{b}=k^{2}=2.403$ be the value
of the coefficient $a\left( \mathbf{x}\right) $ in the background,
which was computed in subsection 7.3. Our ratios $a_{incl}/a_{b}$
where
\begin{equation}
\frac{a_{incl}}{a_{b}}=2,3,4,\infty .  \label{8.1}
\end{equation}%
The value $a_{incl}/a_{b}=\infty $ means that the inclusion was
filled with a black absorber, i.e. black ink. Table \ref{Table1}
gives a detailed information of our real data with different
positions, numbers, diameters and contrast of inclusions. "Y" means
that the experiment was done for this setting, and "-" means that
the experiment was not done. Note that in Group 3 we have imaged two
different inclusions simultaneously.

\begin{table}[htbp]
\begin{center}
\begin{tabular}{|c|c|c|c|c|c|c|}
\hline
Inclusion Groups & Positions & Diameters & Ratio2 & Ratio3 & Ratio4 & Ratio$%
\infty$ \\ \hline Group 1 &
\includegraphics[width=0.45in]{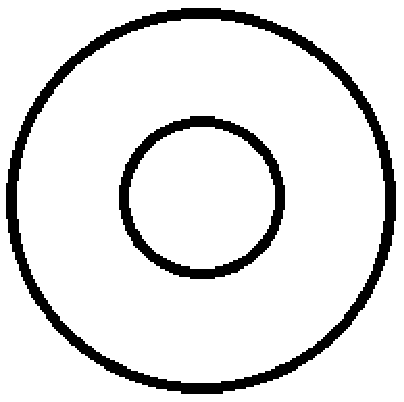} & 5mm & Y & Y & Y & Y
\\ \hline
Group 2 & \includegraphics[width=0.45in]{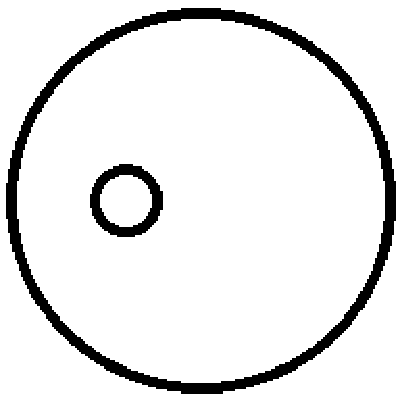} & 3mm & Y & Y
& Y & -
\\ \hline
Group 3 & \includegraphics[width=0.45in]{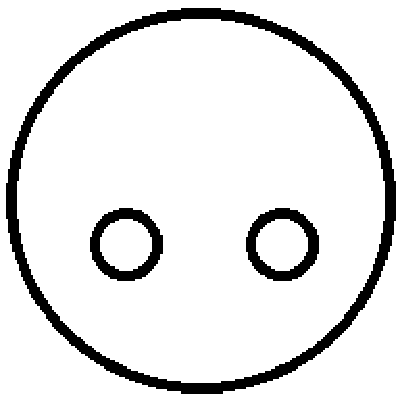} & 3mm & Y & Y
& Y & -
\\ \hline
\end{tabular}%
\end{center}
\caption{Real data} \label{Table1}
\end{table}

As described in sub-subsection 5.2.1, we construct the
\textquotedblleft asymptotic tail\textquotedblright\ on the first
stage using light sources 1,4,5, and 6 as an initial approximation,
to be followed by other
subroutines for further refinements. The image of the function $a(x,z)$ in (%
\ref{5.8}) is depicted on Fig. \ref{F6}, for a phantom where the
theoretical
value of the inclusion/background contrast (\ref{8.1}) was $a_{incl}/a_{b}=3$%
. Images from the asymptotic tails for other phantoms listed in
Table 8.1 were similar.

\begin{figure}[htbp]
\centering
\includegraphics[width=0.4\textwidth]{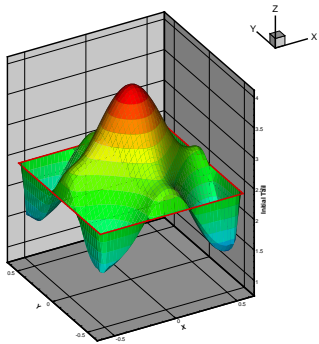}
\caption{\emph{The function $a\left( x,z\right) $\ in
(\protect\ref{5.8}) is depicted. This function was obtained on the
first stage procedure for the tail (sub-subsection 5.2.1) as an
initial approximation. The initial reconstruction is obtained from a
phantom where the theoretical value of the inclusion/background
contrast (\protect\ref{8.1}) was $a_{incl}/a_{b}=3.$}} \label{F6}
\end{figure}

Figures \ref{F10}- Fig. \ref{F12} depict the reconstructed images
from our real data. Fig. \ref{F10} depicts the 3D plots of
reconstructed images of group 1 for contrast values of: 2:1, 3:1,
4:1 and $\infty :1$, from the left to the right respectively. Fig.
\ref{F11} - Fig. \ref{F12} show 2:1, 3:1,
4:1, from left to right respectively for groups 2,3. In all our examples $%
\varepsilon =10^{-5}$ in (\ref{5.10}). The finally reconstructed
results for
contrasts $a_{b}^{-1}\max a\left( \mathbf{x}\right) $ are listed in Table %
\ref{Table2}. Note that our above reconstruction algorithm does not
use any
knowledge of neither the location of the inclusion, nor the contrast value $%
a_{incl}/a_{b}$.

\begin{figure}[htbp]
\centering
\includegraphics[width=0.24\textwidth]{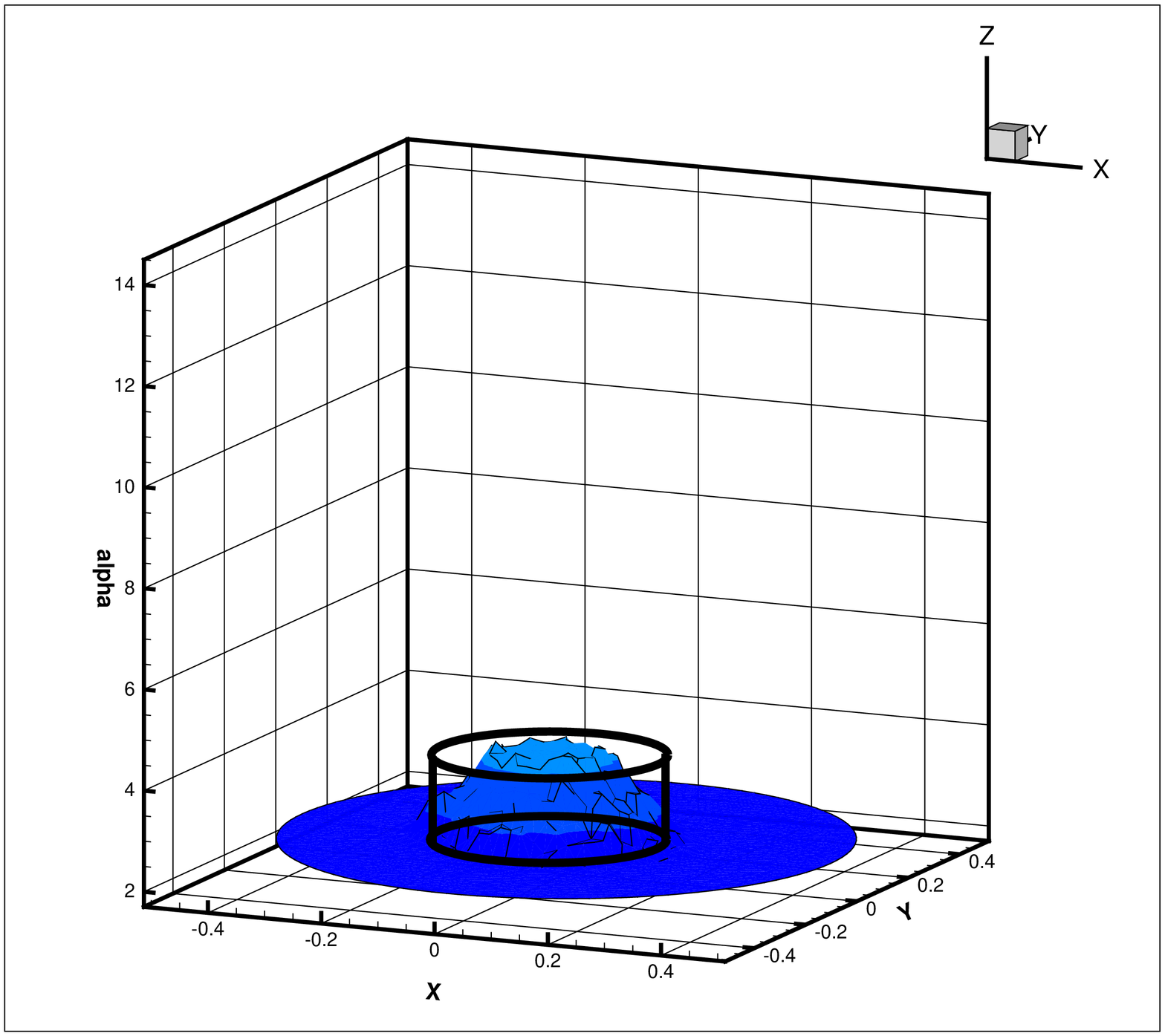} %
\includegraphics[width=0.24\textwidth]{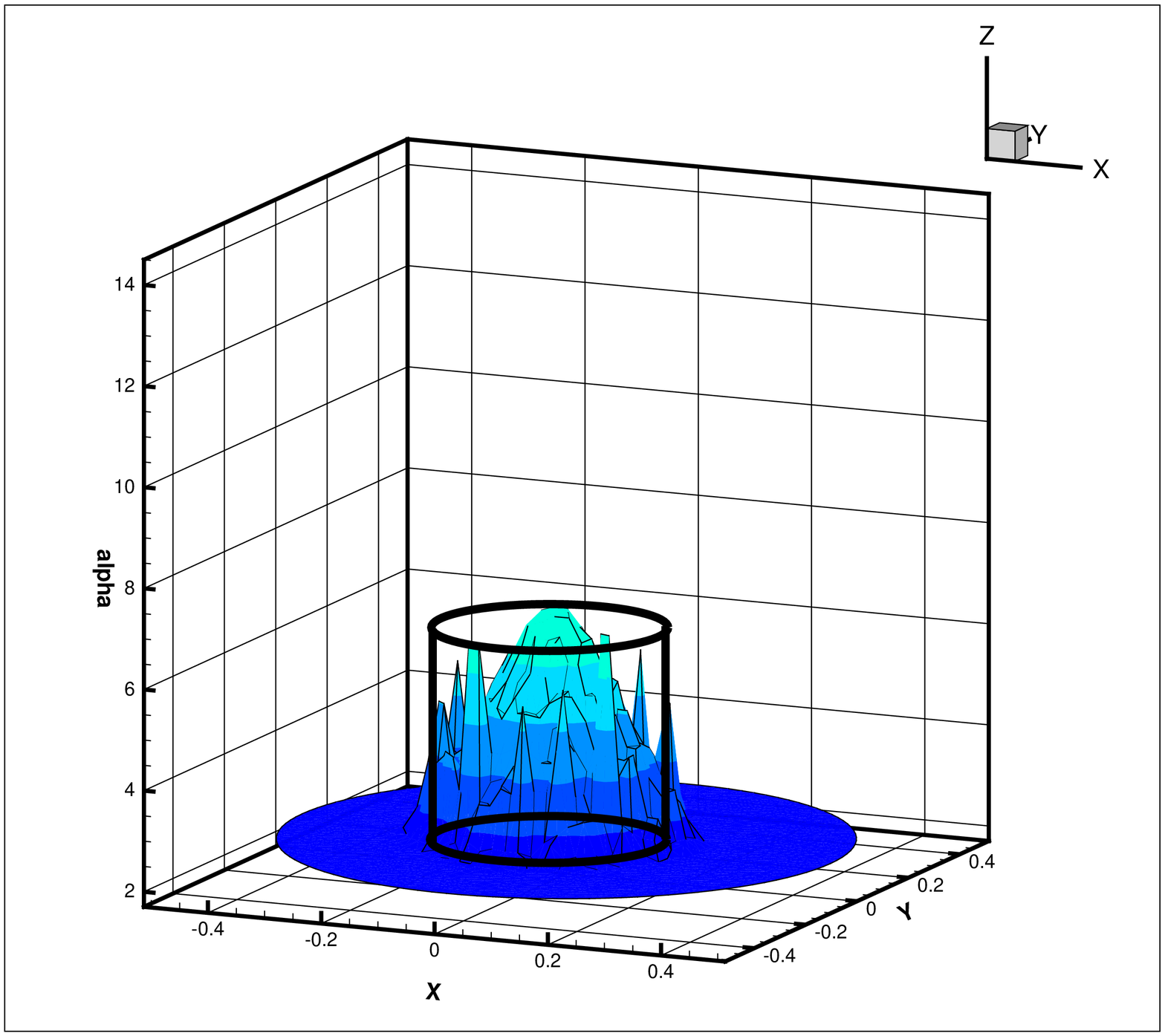} %
\includegraphics[width=0.24\textwidth]{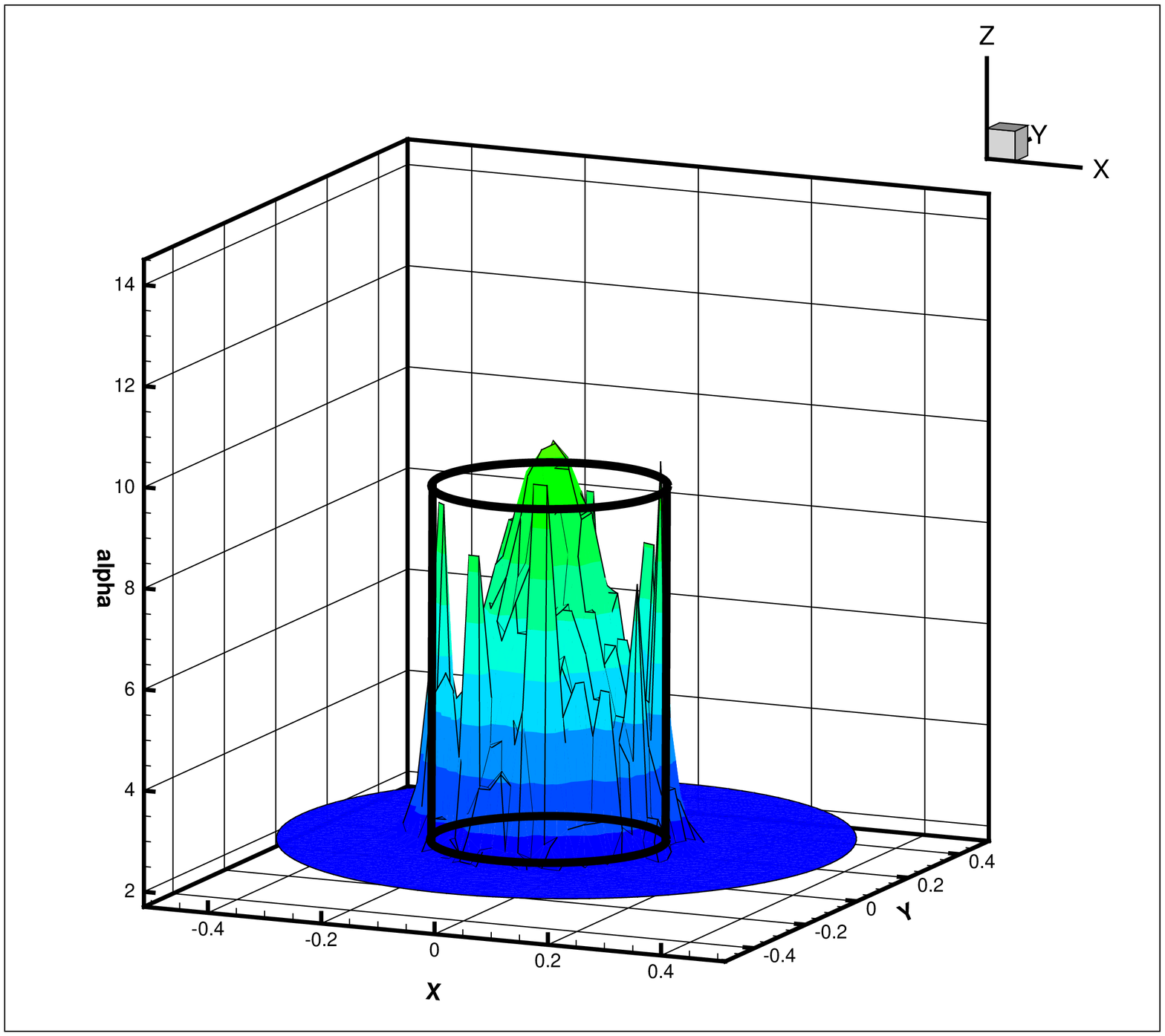} %
\includegraphics[width=0.24\textwidth]{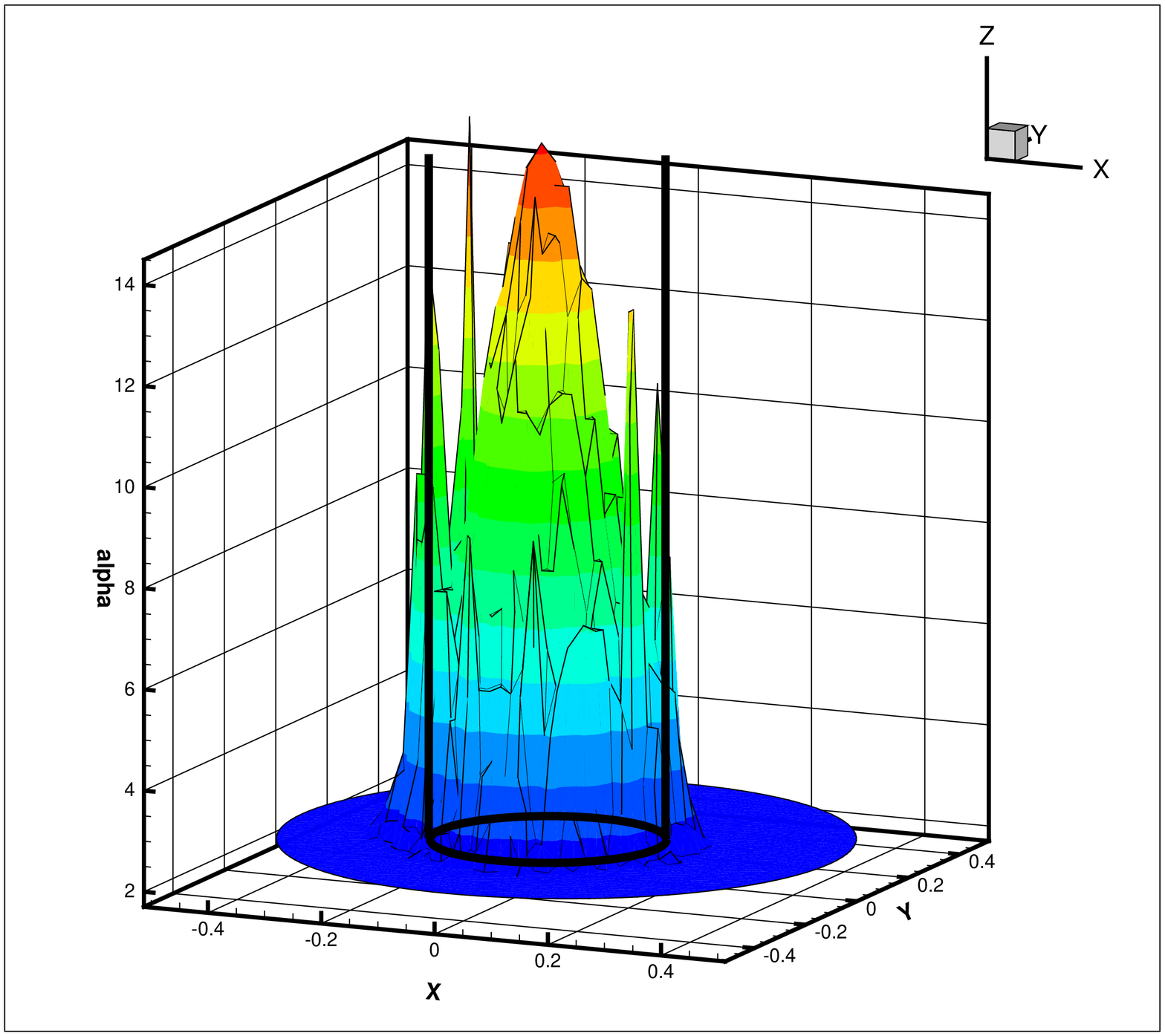}
\caption{Reconstructed inclusion contrast of data group 1, actual
contrasts are 2:1, 3:1, 4:1 and $\infty:1$, from left to right
respectively. The transparent frames show the theoretical values of
inclusion/background contrasts of actual inclusions, which are made
of different ink-intralipid mix.} \label{F10}
\end{figure}

\begin{figure}[htbp]
\centering
\includegraphics[width=0.28\textwidth]{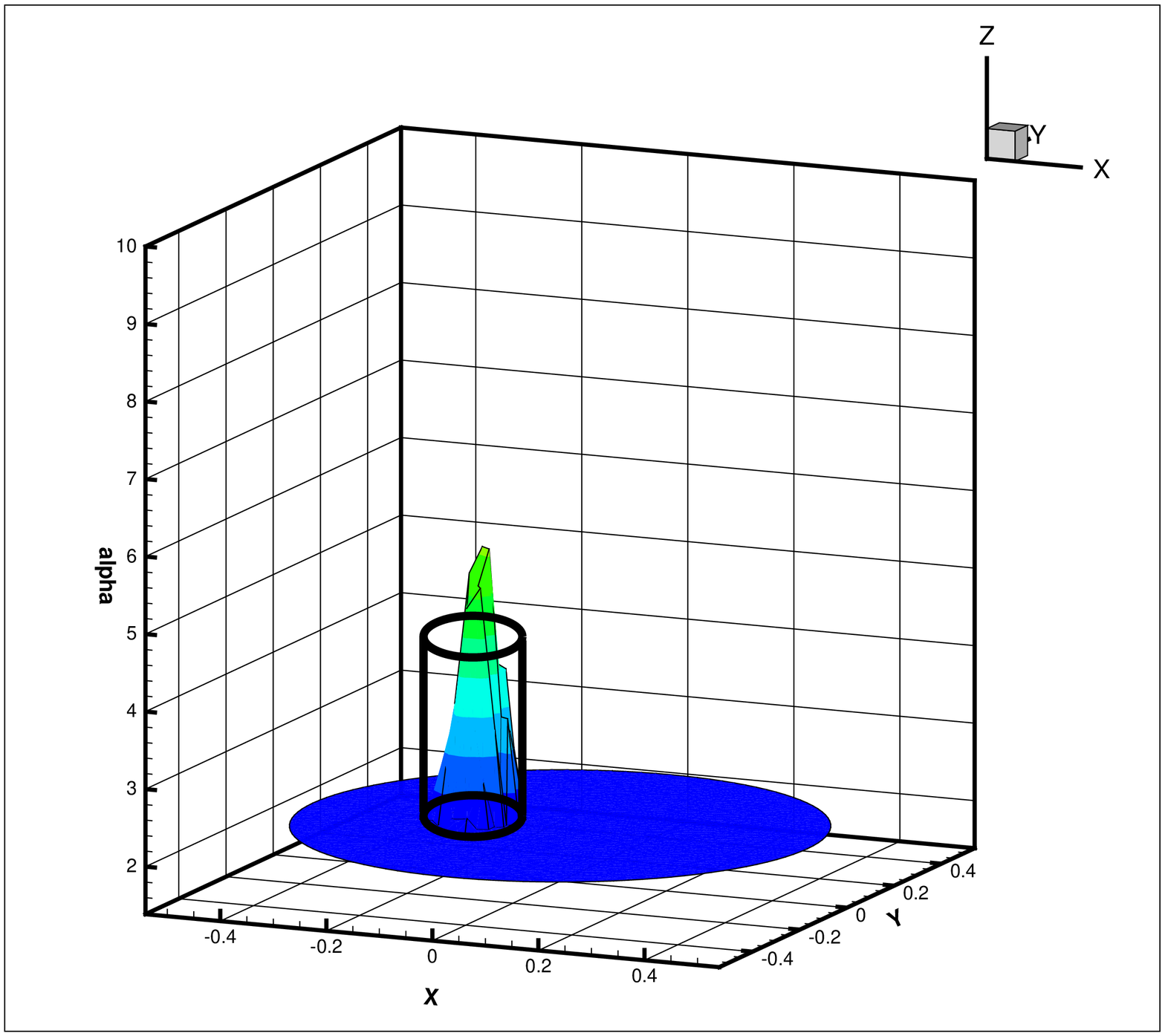} %
\includegraphics[width=0.28\textwidth]{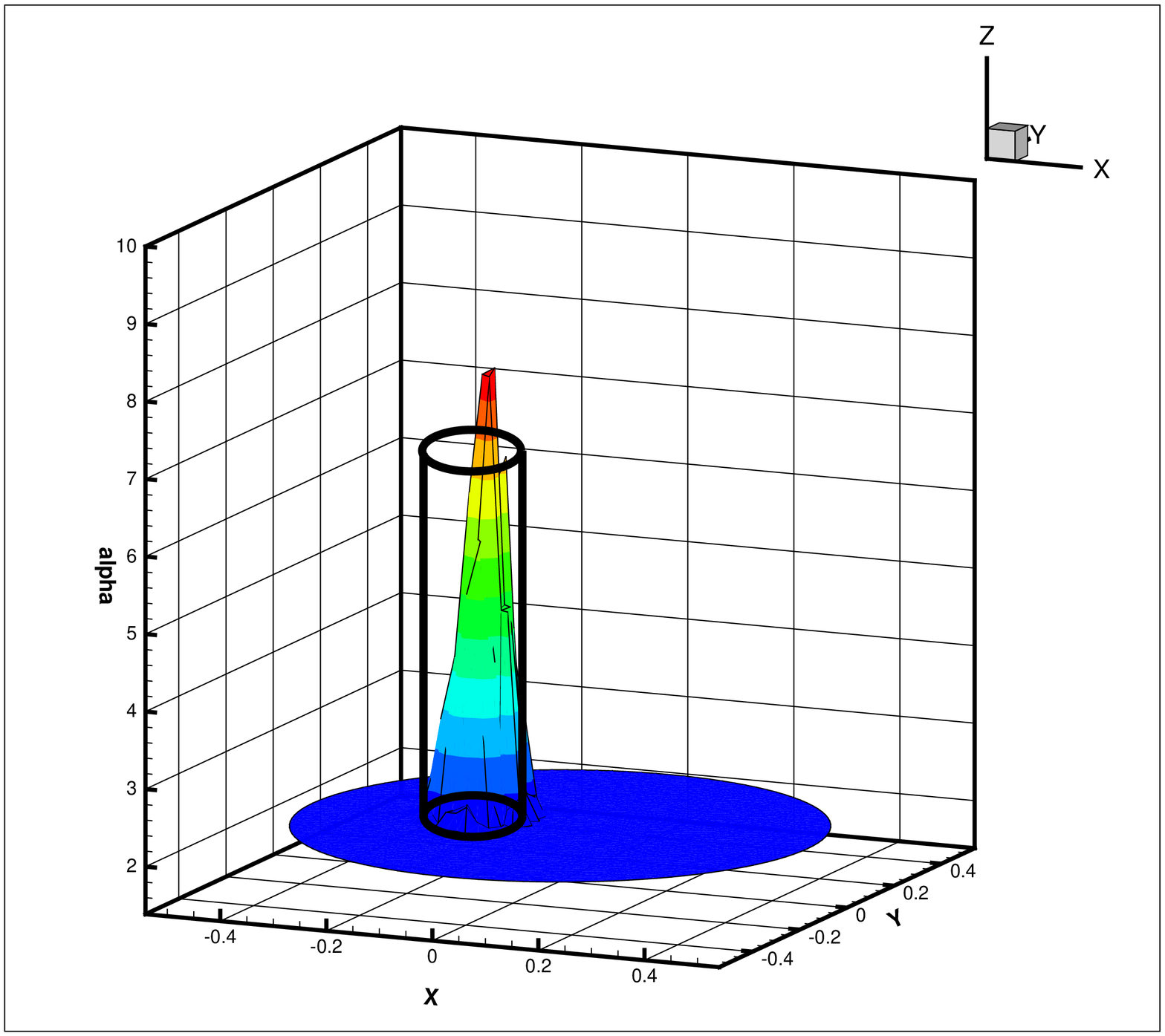} %
\includegraphics[width=0.28\textwidth]{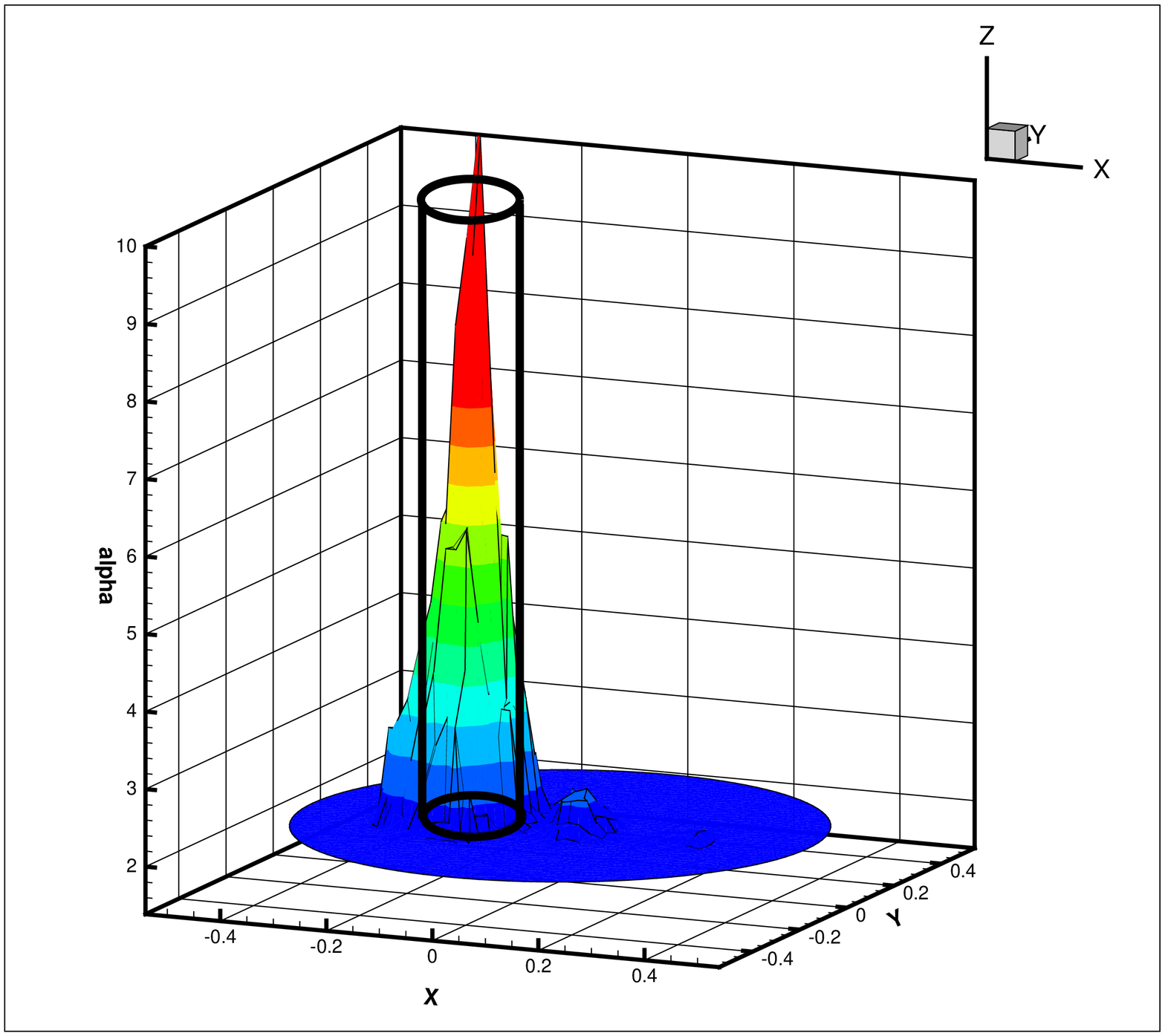}
\caption{Reconstructed inclusion contrast of data group 2, actual
contrasts are 2:1, 3:1, 4:1, from left to right respectively. The
transparent frames show the theoretical values of
inclusion/background contrasts of actual inclusions, which are made
of different ink-intralipid mix. } \label{F11}
\end{figure}

\begin{figure}[htbp]
\centering
\includegraphics[width=0.28\textwidth]{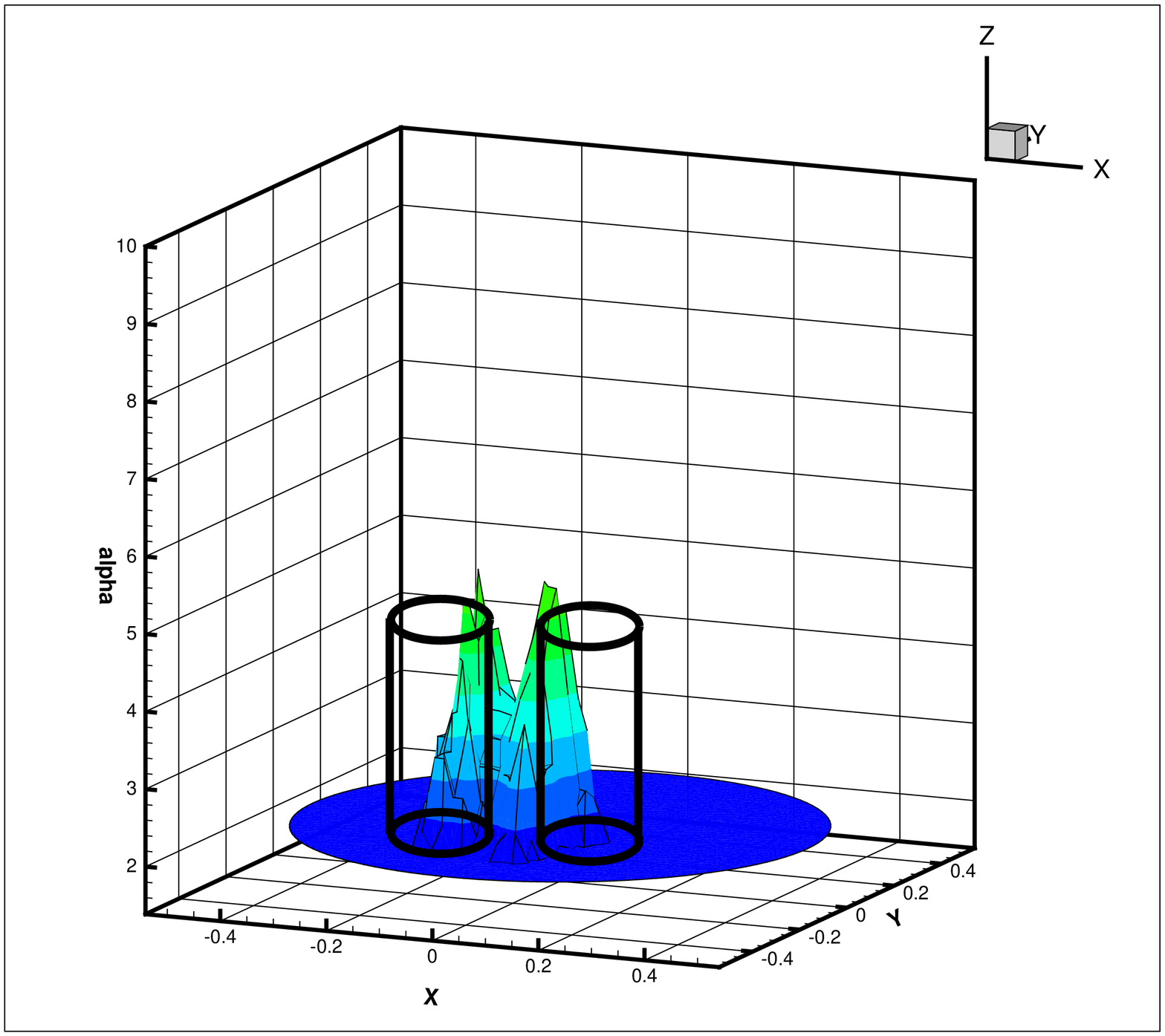} %
\includegraphics[width=0.28\textwidth]{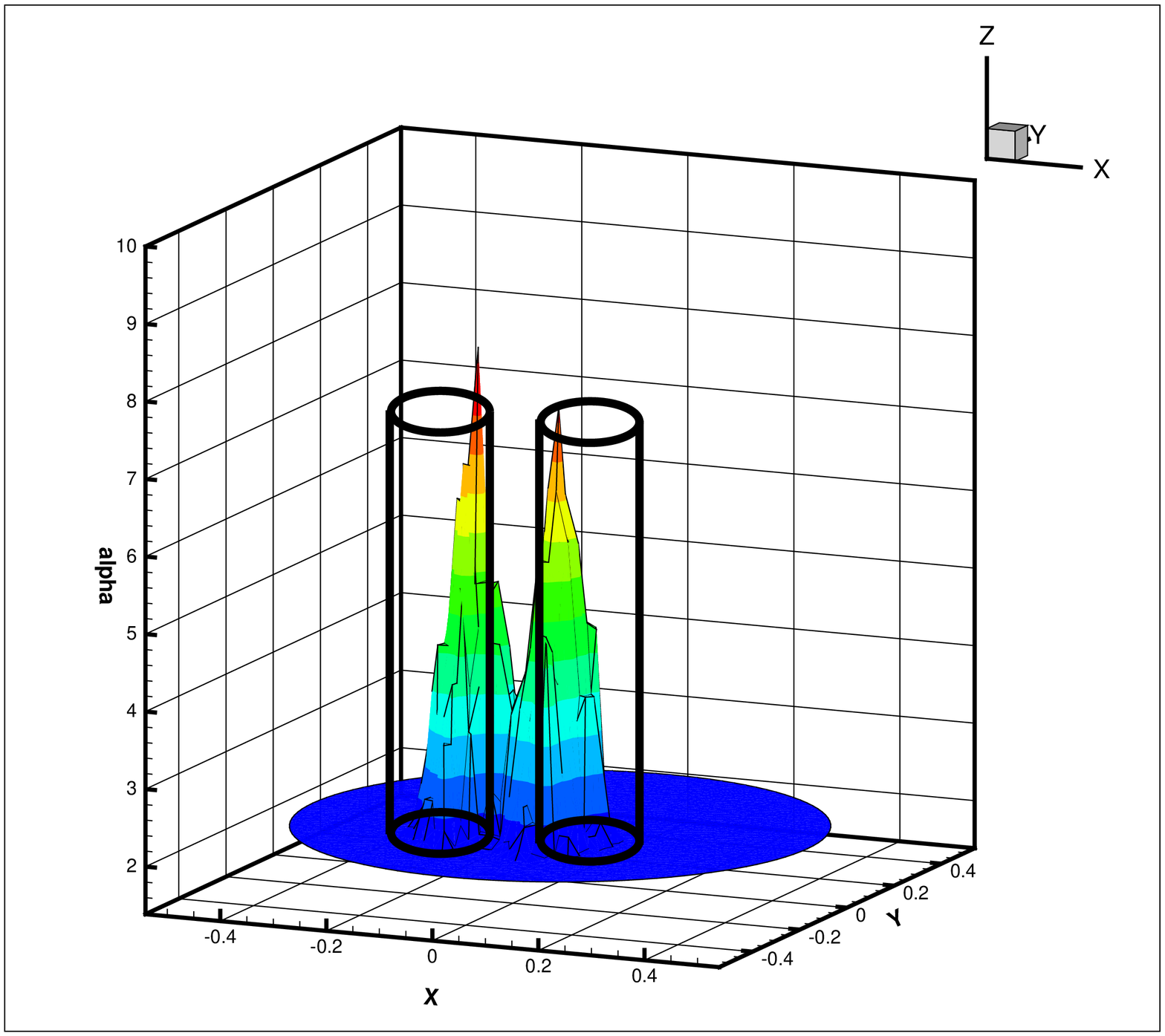} %
\includegraphics[width=0.28\textwidth]{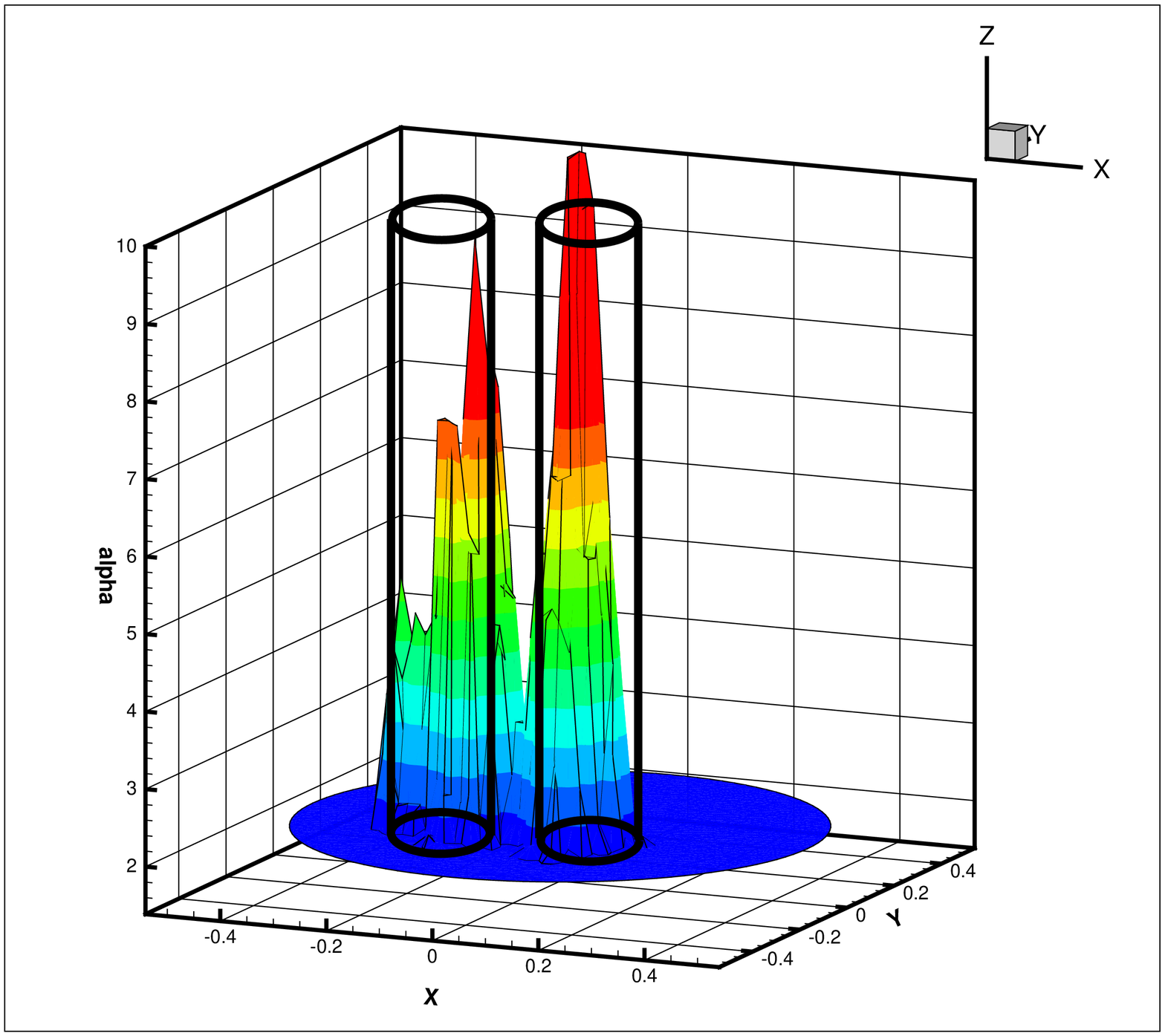}
\caption{Reconstructed inclusion contrast of data group 3, actual
contrasts are 2:1, 3:1, 4:1, from left to right respectively. The
transparent frames show the theoretical values of
inclusion/background contrasts of actual inclusions, which are made
of different ink-intralipid mix.} \label{F12}
\end{figure}

\begin{table}[htbp]
\begin{center}
\begin{tabular}{|c|c|c|c|}
\hline
& The true contrast $a_{incl}/a_{b}$ & $a_{b}^{-1}\max a\left( \mathbf{x}%
\right) $ & Relative Error \\ \hline \multirow{4}{*}{Group 1} & 2 &
2.11 & 0.056 \\ \cline{2-4} & 3 & 2.9 & 0.032 \\ \cline{2-4} & 4 &
4.22 & 0.057 \\ \cline{2-4} & $\infty$ & 6.69 & unknown \\ \hline
\multirow{4}{*}{Group 2} & 2 & 2.33 & 0.167 \\ \cline{2-4} & 3 &
3.29 & 0.0986 \\ \cline{2-4} & 4 & 4.57 & 0.143 \\ \hline
\multirow{4}{*}{Group 3} & 2 & 2.29 & 0.148 \\ \cline{2-4} & 3 &
3.49 & 0.164 \\ \cline{2-4} & 4 & 4.58 & 0.147 \\ \hline
\end{tabular}%
\end{center}
\caption{Reconstructed values of the contrast $a_{b}^{-1}\max
a\left(
\mathbf{x}\right) $ within imaged inclusions and they relative errors, $%
a_{b}=2.403,$ compare with (8.1).} \label{Table2}
\end{table}

\bigskip

\section{Summary}

We have worked with a set of real data, using the approximately
globally convergent numerical method of \cite{bib18} for a
Coefficient Inverse Problem for an elliptic equation.\ These data
mimic imaging of clots in the heads of a mouse. We have introduced a
new concept of the approximate global convergence property and have
established this property for the discrete case of a finite number
of finite elements, unlike the continuous case of \cite{bib18}.
Figures \ref{F10}- Fig. \ref{F12} as well as Table \ref{Table2}
demonstrate that our are quite accurate, including both
inclusion/background contrasts and locations of inclusions. Note
that accurate values of contrasts are usually hard to reconstruct
via locally convergent algorithms. On the other hand, these values
are especially important for our target application, since they
might be used for monitoring stroke treatments. Therefore, following
the concept of Steps 1-6 of subsection 4.1, we conclude that our
approximate mathematical model of subsection 4.3 is a valid one.

\begin{center}
\textbf{Acknowledgments}
\end{center}

The work of all authors was supported by the National Institutes of
Health grant number 1R21NS052850-01A1. In addition, the work of MVK
was supported by the U.S. Army Research Laboratory and U.S. Army
Research Office under the grant number W911NF-11-1-0399.


\label{lastpage}

\end{document}